\newtheorem{Remark}{Remark}% number Remark
\theoremstyle{definition}
\newtheorem{Theorem}{Theorem}% number Theorem
\newtheorem{Corollary}{Corollary}% number Corollary
\newtheorem{Lemma}{Lemma}% number Lemma
\renewcommand{\maketag@@@}[1]{\hbox{\m@th\normalsize\normalfont#1}}%
\let\myorg@bibitem\bibitem
\def\bibitem#1#2\par{%
	\@ifundefined{bibitem@#1}{%
		\myorg@bibitem{#1}#2\par
	}{%
		\begingroup
		\color{\csname bibitem@#1\endcsname}%
		\myorg@bibitem{#1}#2\par
		\endgroup
	}%
}
\begin{document}
\begin{spacing}{1}
%\bibliographystyle{alpha}% Number the references in the order of their appearance.
% \linenumbers% 插入行号
% \pagewiselinenumbers% 按页重头
% 分文件编写

\title{
Unauthorized UAV Countermeasure for Low-Altitude Economy: Joint Communications and Jamming based on MIMO Cellular Systems
}
\author{Zhuoran Li, Zhen Gao, Kuiyu Wang, Yikun Mei, Chunli Zhu, Lei Chen, Xiaomei Wu, Dusit Niyato \textit{Fellow}, \textit{IEEE}
	% <-this % stops a space
	%		\thanks{}% <-this % stops a space
	%		This paper was produced by the IEEE Publication Technology Group. They are in Piscataway, NJ.
	\thanks{	
	Zhuoran Li, Kuiyu Wang, Yikun Mei, and Lei Chen are with the School of Information and Electronics, Beijing Institute of Technology, Beijing 100081, China, and also with the Advanced Research Institue of Multidisciplinary Sciences, Beijing Institute of Technology, Beijing 100081, China
		(e-mails: \mbox{lizhuoran2000@qq.com},
		\mbox{3220215125@bit.edu.cn},
		\mbox{meiyikun@bit.edu.cn},		
		\mbox{leichen@bit.edu.cn}
		).
	
	Zhen Gao is with State Key Laboratory of CNS/ATM, Beijing Institute of Technology (BIT), Beijing 100081, China, also with BIT Zhuhai 519088, China, also with the MIT Key Laboratory of Complex-Field Intelligent Sensing, BIT, Beijing 100081, China, also with the Advanced Technology Research Institute of BIT (Jinan), Jinan 250307, China, and also with the Yangtze Delta Region Academy, BIT (Jiaxing), Jiaxing 314019, China
	(e-mails: \mbox{gaozhen16@bit.edu.cn}).\textit{(Corresponding author: Zhen Gao.)}
	
	Chunli Zhu is with the State Key Laboratory of CNS/ATM \& State Key
	Laboratory of Explosion Science and Safety Protection, Beijing Institute of
	Technology, Beijing 100081, China (e-mail: chunlizhu@bit.edu.cn).
	
	Xiaomei Wu is with China Tower Corporation Limited, No.9 Dongran North Street, Haidian District, Beijing 100089, China (e-mail: wuxm@chinatowercom.cn).
	
	Dusit Niyato is with the College of Computing and Data Science, Nanyang
	Technological University, Singapore 639798, (e-mail: dniyato@ntu.edu.sg).
	}
}
%	\vspace{-10mm}
% The paper headers
%	\markboth{}%
%	Journal of \LaTeX\ Class Files,~Vol.~14, No.~8, August~2021
%	{Shell \MakeLowercase{\textit{et al.}}: A Sample Article Using IEEEtran.cls for IEEE Journals}

%	\IEEEpubid{}
%	0000--0000/00\$00.00~\copyright~2021 IEEE
% Remember, if you use this you must call \IEEEpubidadjcol in the second
% column for its text to clear the IEEEpubid mark.

\maketitle
\begin{abstract}	
To ensure the thriving development of low-altitude economy, countering unauthorized unmanned aerial vehicles (UAVs) is an essential task.
The existing widely deployed base stations hold great potential for joint communication and jamming.
In light of this, this paper investigates the joint design of beamforming to simultaneously support communication with legitimate users and countermeasure against unauthorized UAVs based on dual-functional multiple-input multiple-output (MIMO) cellular systems.	
We first formulate a joint communication and jamming (JCJ) problem, relaxing it through semi-definite relaxation (SDR) to obtain a tractable semi-definite programming (SDP) problem, with SDR providing an essential step toward simplifying the complex JCJ design.
Although the solution to the relaxed SDP problem cannot directly solve the original problem, it offers valuable insights for further refinement. 
Therefore, we design a novel constraint specifically tailored to the structure of the SDP problem, ensuring that the solution adheres to the rank-1 constraint of the original problem.
Finally, we validate effectiveness of the proposed JCJ scheme through extensive simulations.
Simulation codes are provided to reproduce the results in this paper: \href{https://github.com/LiZhuoRan0}{https://github.com/LiZhuoRan0}.
The results confirm that the proposed JCJ scheme can operate effectively when the total number of legitimate users and unauthorized UAVs exceeds the number of antennas.
\end{abstract}

\begin{IEEEkeywords}
	low-altitude economy, countermeasure, unmanned aerial vehicle, multiple-input multiple-output, joint communication and jamming
\end{IEEEkeywords}

\section{Introduction}\label{sec_Introduction}
\IEEEPARstart{T}{he}  concept of low-altitude economy (LAE) is conceived as an economic system that integrates a variety of low-altitude aviation activities involving both unmanned and manned craft, including unmanned aerial vehicles (UAVs) and electric vertical take-off and landing (eVTOL) aircraft\cite{ref_arXiv_IllegalUAV_NoJamming}.
Anticipated to facilitate a multitude of low-altitude services across sectors such as transportation, environmental surveillance, agriculture, and entertainment, LAE holds the promise of significant economic and social benefits.
This potential has sparked a surge in research interest\cite{ref_UAV_YongZeng,ref_CJE_UAV_Networks,wang2024maximizing,ref_CJE_IoT_DRL,ref_SpaceSciTechnol_0176}.

For LAE to be successful, it is essential not only to ensure continuous wireless communication with authorized UAVs, but also to implement necessary countermeasures against unauthorized UAVs\cite{ref_AESM_CounterUAV, ref_Access_CounterUAV}.
Generally, there are two kinds of countermeasure techniques including  physical capture and jamming\cite{ref_Conf_UAV_CommCounterPhysicalCounter}.
Physical capture approaches are efficient and low-cost, but not friendly to pilots.
Jamming is the most popular method used in neutralizing UAVs entering restricted areas, and there are four categories of jamming\cite{ref_AESM_CounterUAV}.
The first category is emitting noise-like unstructured signals to reduce the signal-to-interference-plus-noise ratio (SINR) of the unauthorized UAVs.
Partial-band noise jamming was implemented to cut off the control link between the UAV and its controller in \cite{ref_Conf_PartialBandNoise}.
The second category entails broadcasting deceptive Global Position System (GPS) signals to lead unauthorized UAVs out of restricted areas.
The third category involves hacking the communication protocols of the unauthorized UAVs and generating fake commands to control them to fly away restricted areas\cite{ref_Conf_IdentifyUAVSigToJam, ref_IdentifyUAVSig_1, ref_IdentifyUAVSig_2, ref_IdentifyUAVSig_3}.
The fourth category is passive interference, which involves using passive devices to modify the signals in the environment, causing multipath signals to superpose destructively at the receiver.
A common approach is to design the phase of a reconfigurable intelligent surface (RIS)\cite{refWclRisJamming,refIoTJDestructiveRIS,refTVTillegalRISchannelAging,ref_IETComm_UAV_Secure}.

Traditional counter-UAV methods, which require dedicated devices, platforms, and architectures \cite{ref_Access_CounterUAV}, significantly increase the cost of countering unauthorized UAVs. However, the potential of widely deployed multiple-input multiple-output (MIMO) base stations (BSs) has not been fully utilized for this purpose. With the evolution of dense cell infrastructures into networks with integrated sensing and communication (ISAC) capabilities, referred to as perceptive mobile networks (PMNs) \cite{ref_CST_AndrewZhang}, there is a promising opportunity to leverage BSs for countering unauthorized UAVs. In addition to providing communication and sensing services, BSs can also offer effective countermeasures, ensuring the lawful and orderly use of the airspace while minimizing the need for costly specialized equipment.

\subsection{Prior Works}
Most of existing ISAC studies focus on utilizing a unified hardware platform and/or waveform to simultaneously achieve demodulation of communication data and extraction of target parameters with noisy observations\cite{ref_TCOM_FanLiu,ref_JSAC_22FanLiu, ref_JSTSP_AndrewZhang}.
User's activity, channel, and location in extra-large MIMO systems were joint sensed in \cite{ref_2024_LiQiao}.
A compressed sampling perspective to facilitate ISAC processing was proposed in \cite{ref_TWC_ZiweiWan}.
User equipment (UE)’s uplink channel and location in Terahertz extra-large array systems were jointly obtained in \cite{ref_LZR}.
Sensing-centric index modulation-based ISAC systems was proposed in \cite{ref_JSTSP_DingyouMa_IM}.
Given the potential of UAVs to empower a wide range of industries, there is also an abundance of work integrating UAVs with ISAC\cite{ref_CM_UAV_ISAV_Swarm,ref_WC_UAV_ISAC_KaitaoMeng,ref_arXiv_IllegalUAV_NoJamming}.

However, the aforementioned ISAC efforts fall short in considering countermeasures against unauthorized UEs.
Various methods for both jamming and anti-jamming have been extensively reviewed in \cite{refCST}.
To address the challenges posed by interference during communication, specialized waveforms were developed to maximize detection capabilities \cite{refTSPISCJWaveformDesign}. 
Additionally, researchers have investigated the use of jamming signal echoes for sensing, while simultaneously preserving communication functionality. This approach enables dynamic resource allocation based on the sensing outcomes \cite{refICASSPSafeguardingUAVISJC}.
In \cite{refTAESearlyaccessAdaptiveJammingFHSS}, deep reinforcement learning (DRL) methods were employed to jam devices utilizing frequency hopping spread spectrum, leveraging specialized jamming equipment.
The scenario of a full-duplex system with a single UE and UAV was also investigated, where the unauthorized UAV was detected and jammed with the uplink signals from the UE \cite{refCLXinyiWang}.
Furthermore, simultaneous communication and jamming in the presence of CSI errors was rigorously studied \cite{refTVTRobustBeamformingICAJ}.
A novel approach using optical devices to generate electromagnetic signals was proposed in \cite{refJournalLightwaveTech}, enabling secure communication while conducting radar range and velocity deception jamming. 
Real-world tests have demonstrated the efficacy of narrowband signals with low energy in disrupting the fifth-generation networks \cite{refAccessRealTest}.

The concept of integrating radar, communications, and jamming was proposed for jamming illegal UEs in \cite{ref_TWC_NanChiSu}.
The premise of this study centers around security, with the assumption that an entity being jammed is an eavesdropper.
This assumption is restrictive and does not necessarily apply to scenarios involving all the unauthorized UAV flights, where the unauthorized UAVs might not engage in eavesdropping.
Therefore, further research is required to overcome the challenge of jamming common unauthorized UAVs.
Recently, a generalized integrated communications and jamming (ICAJ) framework has been proposed to enable communication and jamming to reinforce each other in one system, thereby achieving more reliable communication and more efficient jamming\cite{ref_CM_ICAJ, ref_ChinaComm_ICAJ}.
However, the aforementioned studies assume separate communication data streams and jamming streams.
In reality, the communication data streams and jamming streams can be jointly designed for a more coordinated ICAJ.

\begin{table*}[]
	\vspace{-5mm}
	\caption{State-of-the-art Jamming Solutions} 
	
	\begin{tabular}{c|ccc|ccc|c|c}
			\hline \hline
			\multirow{4}{*}{\textbf{Ref.}} & \multicolumn{3}{c|}{\textbf{Jamming Methods}}                                                                                                                                                                                                                          & \multicolumn{3}{c|}{\textbf{Jamming Categories}}                                & \multirow{4}{*}{\begin{tabular}[c]{@{}c@{}}\textbf{Relationship} \\ \textbf{with Sensing } \\ \textbf{and Communi-}\\ \textbf{cation}\end{tabular}} & \multirow{4}{*}{\textbf{Algorithms}} \\ \cline{2-7}
			& \multicolumn{1}{c|}{\begin{tabular}[c]{@{}c@{}}Additional\\ Equipment\end{tabular}} & \multicolumn{1}{c|}{\begin{tabular}[c]{@{}c@{}}Additional\\ Jamming\\ Streams\end{tabular}} & \begin{tabular}[c]{@{}c@{}}Reusing\\ Communi-\\ cation Signals\end{tabular} & \multicolumn{1}{c|}{Active} & \multicolumn{1}{c|}{Passive} & Deception &                                                                                                         &                            \\  \hline \cite{refWclRisJamming}
			& \multicolumn{1}{c|}{\checkmark}                                                               & \multicolumn{1}{c|}{}                                                                       &        \checkmark                                                                   & \multicolumn{1}{c|}{}       & \multicolumn{1}{c|}{\checkmark}        &           &                           \textcircled{1}                                                                              &    BCD+SDR                        \\ \hline \cite{refIoTJDestructiveRIS}
			& \multicolumn{1}{c|}{\checkmark}                                                               & \multicolumn{1}{c|}{}                                                                       &   \checkmark                                                                        & \multicolumn{1}{c|}{}       & \multicolumn{1}{c|}{\checkmark}        &           &                           \textcircled{1}                                                                              &   {\begin{tabular}[c]{@{}c@{}}alternating opti-\\ mization+SCA\end{tabular}}                          \\ \hline \cite{refTVTillegalRISchannelAging}
			& \multicolumn{1}{c|}{\checkmark}                                                               & \multicolumn{1}{c|}{}                                                                       &      \checkmark                                                                     & \multicolumn{1}{c|}{}       & \multicolumn{1}{c|}{\checkmark}        &           &                           \textcircled{1}                                                                              &       ————                     \\ \hline \cite{refTSPISCJWaveformDesign}
			& \multicolumn{1}{c|}{}                                                               & \multicolumn{1}{c|}{\checkmark}                                                                       &                                                                           & \multicolumn{1}{c|}{\checkmark}       & \multicolumn{1}{c|}{}        &           &                            \textcircled{3}                                                                              &   {\begin{tabular}[c]{@{}c@{}}convex\\ optimization\end{tabular}}                         \\ \hline \cite{refICASSPSafeguardingUAVISJC}
			& \multicolumn{1}{c|}{}                                                               & \multicolumn{1}{c|}{\checkmark}                                                                       &                                                                           & \multicolumn{1}{c|}{\checkmark}       & \multicolumn{1}{c|}{}        &           &                           \textcircled{3}                                                                               &      {\begin{tabular}[c]{@{}c@{}}S-procedure\\ SDR\end{tabular}}                      \\ \hline \cite{refTAESearlyaccessAdaptiveJammingFHSS}
			& \multicolumn{1}{c|}{\checkmark}                                                               & \multicolumn{1}{c|}{}                                                                       &                                                                           & \multicolumn{1}{c|}{\checkmark}       & \multicolumn{1}{c|}{}        &           &                           \textcircled{1}                                                                              &   DRL                         \\ \hline \cite{refCLXinyiWang}
			& \multicolumn{1}{c|}{}                                                               & \multicolumn{1}{c|}{\checkmark}                                                                       &                                                                           & \multicolumn{1}{c|}{\checkmark}       & \multicolumn{1}{c|}{}        &           &                           \textcircled{3}                                                                               &  {\begin{tabular}[c]{@{}c@{}}alternating opti-\\ mization+SCA\end{tabular}}                           \\ \hline \cite{refTVTRobustBeamformingICAJ}
			& \multicolumn{1}{c|}{}                                                               & \multicolumn{1}{c|}{\checkmark}                                                                       &                                                                           & \multicolumn{1}{c|}{\checkmark}       & \multicolumn{1}{c|}{}        &           &                           \textcircled{3}                                                                              &    {\begin{tabular}[c]{@{}c@{}}customized\\optimization\end{tabular}}                        \\ \hline \cite{refJournalLightwaveTech}
			& \multicolumn{1}{c|}{\checkmark}                                                               & \multicolumn{1}{c|}{}                                                                       &                                                                           & \multicolumn{1}{c|}{\checkmark}       & \multicolumn{1}{c|}{}        &    \checkmark       &         \textcircled{2}                                                                                                &         ————                   \\ \hline
			\cite{ref_TWC_NanChiSu}
			& \multicolumn{1}{c|}{}                                                               & \multicolumn{1}{c|}{\checkmark}                                                                       &                                                                           & \multicolumn{1}{c|}{\checkmark}       & \multicolumn{1}{c|}{}        &           &                           \textcircled{3}                                                                               &   {\begin{tabular}[c]{@{}c@{}}SDR+convex\\ optimization\end{tabular}}                         \\\hline \cite{ref_CM_ICAJ}
			&  \multicolumn{1}{c|}{}                                                               & \multicolumn{1}{c|}{\checkmark}                                                                       &                                                                           & \multicolumn{1}{c|}{\checkmark}       & \multicolumn{1}{c|}{}        &           &   \textcircled{2}                                                                                                      &   {\begin{tabular}[c]{@{}c@{}}SDR+interior\\point method\end{tabular}}                         \\ \hline \cite{ref_ChinaComm_ICAJ}
			& \multicolumn{1}{c|}{}                                                               & \multicolumn{1}{c|}{\checkmark}                                                                       &                                                                           & \multicolumn{1}{c|}{\checkmark}       & \multicolumn{1}{c|}{}        &           &                           \textcircled{3}                                                                              &  DRL                          \\ 
			\hline
			proposed
			& \multicolumn{1}{c|}{}                                                               & \multicolumn{1}{c|}{}                                                                       &   \checkmark                                                                        & \multicolumn{1}{c|}{\checkmark}       & \multicolumn{1}{c|}{}        &           &                           \textcircled{2}                                                                               &   {\begin{tabular}[c]{@{}c@{}}SDR+convex\\ optimization\end{tabular}}                         \\ 
			\hline
			\hline
		\end{tabular}
	\label{table_jamming_work}
	\begin{itemize}
			\item \textcircled{1}: jamming, \textcircled{2}: jamming+communication, \textcircled{3}: jamming+communication+sensing.
			
			\item SDR: semi-definite relaxation, DRL: deep reinforcement learning, BCD: block coordinate descent, SCA: successive convex approximation.
		\end{itemize}
\end{table*}

In addition to these active jamming techniques, passive methods have also garnered attention. 
When acting as a collaborator, an RIS can effectively cover blind spots and enhance signal quality.
However, when an RIS operates as an interferer, it can reshape the characteristics of multipath signals passing through it, resulting in destructive interference at the receiver \cite{refWclRisJamming, refIoTJDestructiveRIS, refTVTillegalRISchannelAging}.
The aforementioned jamming-related work is summarized in Table \ref{table_jamming_work}.

Parallel to these advancements, there has been significant progress in anti-jamming research. 
Authors in \cite{refAntiJammingearlyaccess} examined the dynamic interaction between a jammer and a BS, where the BS continuously adjusts its beamformer in response to the jamming power of the jammer. 
Authors in \cite{refTVTAntiJamming} focused on designing robust passive beamforming using RIS to counter jamming, even when angle information is inaccurate.
Furthermore, a different approach was proposed to utilize additional transceivers in the environment to remodulate jamming signals, embedding structured information for potential use \cite{refTWCAntiJamming}.
The continuous advancements in jamming and anti-jamming technologies reinforce each other, driving ongoing progress in both fields.

Essentially, integrating ISAC with UAV countermeasures relies on downlink beamforming as one of the key technologies.
There are numerous classical schemes.
For fully digital beamforming, there are famous channel inversion (CI)-based scheme\cite{ref_03CM_MultiUserDnBF}, linear minimum mean squared error-based scheme\cite{ref_TCOM_LMMSE}, and optimization-based schemes \cite{ref_10SPM_Ottersten,ref_99_Ottersten, ref_04TVT_DnBFOpt}.
For hybrid beamforming, there are semi-definite programming (SDP)-based schemes\cite{ref_19TWC_QingqingWu_SDP},
orthogonal matching pursuit-based schemes\cite{ref_TWC_OMP_Precoding},
and alternating minimization-based schemes\cite{ref_JSTSP_XianghaoYu}.
Thanks to the development of deep learning, there are some advanced data-driven-based beamforming schemes \cite{refJSTSPShicongLiu, refJSACMinghuiWu,refarxivMinghuiWu}
However, to the best of our knowledge, in the scenario of integrated communication and unauthorized UAV countermeasures, there is currently no generalized downlink beamforming solution.

\subsection{Our Contributions}

Our contributions are summarized as follows.
\begin{itemize}	
	\item 
	We develop a joint communication and jamming (JCJ) model and formulate an associated optimization problem, which can demonstrate how a BS fulfills essential communication demands while efficiently counters unauthorized UAVs with minimal power expenditure.
	Through theoretical analysis of the formulated problem, we yield the suggestion that additional jamming streams are unnecessary.
	Effective countermeasures against unauthorized UAVs can be seamlessly integrated within the UEs' data streams.
	
	\item 
	Upon further in-depth analysis of the proposed JCJ optimization problem's structure, we develop an additional constraint.
	Specifically, the real part of the sum of the elements along any diagonal direction parallel to the main diagonal of the complex symmetric matrix, including cyclically wrapped diagonals, should be no less than a fraction, ranging between 0 and 1, of the matrix's trace.
	This constraint ensures that the solution obtained through the SDP satisfies the rank-1 constraint.
	Simulations confirm the effectiveness of this constraint. 
	
	\item 
	Since the proposed JCJ scheme does not require additional signal streams to jam unauthorized UAVs, the proposed JCJ scheme can operate effectively when the total number of UEs{\footnote{Here, `UEs' also includes legitimate UAVs.}} and unauthorized UAVs exceeds the number of transmit antennas, where traditional schemes fail to operate altogether.
	Moreover, when the number of transmit antennas is fewer than 32, the proposed JCJ scheme is always superior to the traditional channel inversion scheme.
	
\end{itemize}

\subsection{Notation}
Throughout this paper, if not otherwise stated, normal-face letters, boldface lower letters, and boldface uppercase letters denote scalar variables, column vectors, and matrices, respectively;
the transpose operator and the conjugate transpose operator are denoted by $(\cdot)^T$ and $(\cdot)^H$, respectively;
$j=\sqrt{-1}$ is the imaginary unit;
$\mathbb{R}$ and $\mathbb{C}$ are the sets of real-valued and complex-valued numbers, respectively;
$\mathbf{X}[:,m_1:m_2]$ is the matrix composed of column vectors from $m_1$-th column to $m_2$-th column of matrix $\mathbf{X} \in \mathbb{C}^{N\times M}$;
%$[m]$ in ${\bf{X}}[m](\theta)$ means extracting some elements of ${\bf{X}}$ indexed by $[m]$, where $\theta$ is the argument of ${\bf{X}}[m](\theta)$;
$|s|$ is the magnitude of $s$, whether $s$ is a real number or a complex-valued number;
$\mathcal{CN}(\mu, \sigma^2)$ is the circularly-symmetric complex Gaussian distribution with mean $\mu$ and covariance $\sigma^2$;
$\mathcal{U}(a, b)$ is the uniform distribution with interval $a$ and $b$;
$\mathbb{E}[\cdot]$ is the expectation operation;
$\text{Re}(\cdot)$ denotes extraction of the real part;
$\mathbf{0}_N$ is the $N\times 1$ vector with all the elements being 0;
$\mathbf{0}_{N\times M}$ is the $N\times M$ matrix with all the elements being 0;
$\mathbf{I}_N$ is the $N\times N$ unit matrix;
$\mathbf{F}\succeq \mathbf{0}$ denotes $\mathbf{F}$ is a semi-definite matrix;
$c$ is the speed of light.

\section{System Model and Problem Formulation}\label{sec_SysModel_ProblemFormulation}
In this section, we first present the system model and channel model of the investigated MIMO system, and then formulate the JCJ problem.
\subsection{Downlink Transmission Model}
\begin{figure}[!t]
%	\vspace{-5mm}
	\centering
	\color{black}
	%		\vspace{-3mm}	
	\includegraphics[width=3.4in]{./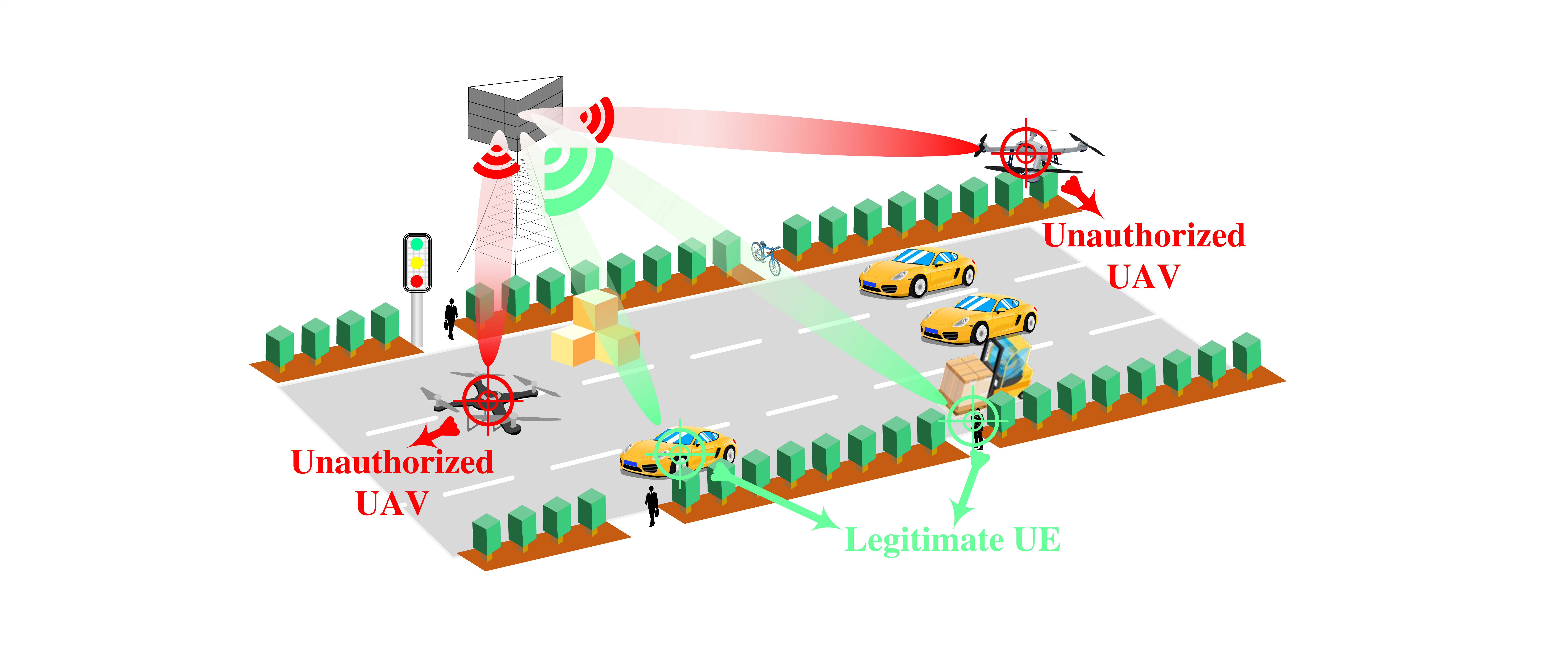}
%	\vspace{-3mm}
	\caption{The considered joint communication and jamming scenario.}
	\label{fig_scene}
%	\vspace{-3mm}
\end{figure}
Consider a multi-user downlink MIMO system, we make the following settings for highlighting the essential aspects of JCJ.
We assume that unauthorized UAVs' operating frequency band is the same as the UE's communication band
{\footnote{We will consider the more general case where unauthorized UAVs' operating band differs from UEs' communication band in future work.}}.
We assume a narrow-band transmission scenario, and a BS adopts a fully digital beamforming architecture\footnote{
For hybrid beamforming architecture, we can derive the hybrid beamformer through the fully digital beamformer obtained in\cite{ref_JSTSP_XianghaoYu,ref_TSP_HybridPrecoding}.
}.
$N_{\text{ue}}$ UEs with a single antenna are served by the BS,
and there are $N_{\text{uav}}$ unauthorized UAVs with a single antenna that should be countered simultaneously.
The BS adopts a uniform linear array equipped with $N_{\text{tx}}$ antennas.
The scenario diagram of the JCJ is shown in Fig. \ref{fig_scene}.
$\mathbf{x}\in \mathbb{C}^{N_{\text{tx}}}$ is the transmitted signal by the BS and can be written as
\begin{align}
	\mathbf{x}=\mathbf{F}\mathbf{s},
\end{align}
where $\mathbf{F}\in \mathbb{C}^{N_{\text{tx}}\times N_s}$ is the beamformer,
${\mathbf{s}} = {[{\mathbf{s}}_{{\text{ue}}}^H\;{\bf{s}}_{{\text{uav}}}^H]^H} \in {^{{N_s}}}$ is the complete data stream and $\mathbb{E}(\mathbf{s}\mathbf{s}^H)=\mathbf{I}_{{N_s}}$,
${\bf{s}}_{{\text{ue}}}\in \mathbb{C}^{N_{\text{ue}}}$ is the UEs' data stream,
${\bf{s}}_{{\text{uav}}}\in \mathbb{C}^{N_{\text{uav}}}$ is the additional jamming stream dedicated to countering unauthorized UAVs,
and $N_s=N_{\text{ue}}+N_{\text{uav}}$.
The aforementioned model has additional $N_{\text{uav}}$ independent data streams to counter $N_{\text{uav}}$ unauthorized UAVs.
However, in practice, jamming unauthorized UAVs does not necessarily require $N_{\text{uav}}$ additional jamming streams, as will be demonstrated by the proposed JCJ scheme and the simulations presented in the following sections.
If some or all of the $N_{\text{uav}}$ additional jamming streams are not required, the corresponding columns of $\mathbf{F}$ can be set to zero.

We consider that ${y_{{\text{ue}},n}}\in \mathbb{C}$ and ${y_{{\text{uav}},m}}\in \mathbb{C}$ are the received signals of the $n$-th UE and the $m$-th unauthorized UAV, respectively, 
and they can be denoted as
\begin{align}
	{y_{{\text{ue}},n}} &= {{\mathbf{h}}^H_{{\text{ue}},n}}{\mathbf{x}} + {w_{{\text{ue}},n},}\nonumber\\
	{y_{{\text{uav}},m}} &= {{\mathbf{h}}^H_{{\text{uav}},m}}{\mathbf{x}} + {w_{{\text{uav}},m}},
\end{align}
where $\mathbf{h}_{\text{ue},n}\in\mathbb{C}^{N_{\text{tx}}}$ and $\mathbf{h}_{\text{uav},m}\in\mathbb{C}^{N_{\text{tx}}}$ are the channels from the BS to the $n$-th UE and from the BS to the $m$-th unauthorized UAV, respectively.
${w_{{\text{ue}},n}}\sim{\cal C}{\cal N}(0,\sigma _{{\text{ue}},n}^2)$ and ${w_{{\text{uav}},n}}\sim{\cal C}{\cal N}(0,\sigma _{{\text{uav}},m}^2)$ are the noise in the UE's and unauthorized UAV's receiver, respectively.
$\sigma _{{\text{ue}},n}^2$ and $\sigma _{{\text{uav}},m}^2$ are the variances of respective noises.
For simplicity and without loss of generality, we model the channel as 
\begin{align}
	\mathbf{h}(r,\theta)=\alpha(r) \mathbf{b}_{N_{\text{tx}}}(\theta),
\end{align}
where $\mathbf{b}_{N_{\text{tx}}}(\theta)=[1, e^{-j2\pi\frac{d}{\lambda}\sin \theta},\ldots,e^{-j2\pi\frac{(N_{\text{tx}}-1)d}{\lambda}\sin \theta}]^T\in\mathbb{C}^{N_{\text{tx}}}$ is the steering vector,
and $\alpha(r) = e^{j\xi}\sqrt{G_{\text{tx}}G_{\text{rx}}\lambda^{2}}/4\pi r$ is the large-scale fading coefficient\cite{ref_LZR}.
$\lambda$ is the wavelength associated with the carrier frequency,
$\theta$ and $r$ are the angle of departure (AoD) and distance between the UE (UAV) and the BS, respectively,
$d=\lambda/2$ is the antenna element spacing,
$G_\text{tx}$ and $G_\text{rx}$ are the antenna gains of the transmit antenna and the receive antenna, respectively,
$\xi\sim\mathcal{U}(0,2\pi)$ is the random phase of the channels.

Then, we can combine the received signals of all the legitimate UEs and unauthorized UAVs into a more compact form as
\begin{align}
	\mathbf{y}& = \mathbf{H}\mathbf{x}+\mathbf{w}\nonumber\\
	& = \mathbf{H} \mathbf{F} \mathbf{s}+\mathbf{w},
\end{align}
where 
${\bf{H}} = {[{\bf{H}}_{{\text{ue}}}\;{\bf{H}}_{{\text{uav}}}]^H}\in\mathbb{C}^{N_s\times N_{\text{tx}}}$,
${\bf{H}}_{{\text{ue}}}=[{\bf{h}}_{{\text{ue},1}}\ ,\ldots,$
${\bf{h}}_{{\text{ue}},N_{\text{ue}}}]\in\mathbb{C}^{N_{\text{tx}}\times N_\text{ue}}$,
${\bf{H}}_{{\text{uav}}}=[{\bf{h}}_{{\text{uav}},1}\ ,\ldots,{\bf{h}}_{{\text{uav}},N_{\text{uav}}}]\in\mathbb{C}^{N_{\text{tx}}\times N_\text{uav}}$
,
${\bf{w}} = [{w_{{\text{ue}},1}},\ldots,w_{{\text{ue}},{N_{{\text{ue}}}}},$
${w_{{\text{uav}},1}},\ldots,{w_{{\text{uav}},{N_{{\text{uav}}}}}}]^T\in\mathbb{C}^{N_s}$,
${\bf{y}} = [{y_{{\text{ue}},1}},\ldots,y_{{\text{ue}},{N_{{\text{ue}}}}},{y_{{\text{uav}},1}},\ldots,$
${y_{{\text{uav}},{N_{{\text{uav}}}}}}]^T\in\mathbb{C}^{N_s}$.

\subsection{Problem Formulation}
The achievable rate of the $n$-th UE can be obtained as
\begin{align}
	&{R_n} \nonumber\\
	&= {\log _2}\left( {1 + \frac{{{{\bf{h}}_{{\text{ue,}}n}^H}{{\bf{f}}_{{\text{ue,}}n}}{\bf{f}}_{{\text{ue,}}n}^H{\bf{h}}_{{\text{ue,}}n}}}{{\sigma _{{\text{ue,}}n}^2 + {{\bf{h}}_{{\text{ue,}}n}^H}{\bf{F}}{{\bf{F}}^H}{\bf{h}}_{{\text{ue,}}n} - {{\bf{h}}_{{\text{ue,}}n}^H}{{\bf{f}}_{{\text{ue,}}n}}{\bf{f}}_{{\text{ue,}}n}^H{\bf{h}}_{{\text{ue,}}n}}}} \right),
\end{align}
where ${{\bf{f}}_{{\text{ue,}}n}}={\bf{F}}[:,n]$.
The SINR of the $m$-th unauthorized UAV can be computed as
\begin{align}
	{\Gamma _{m}} = 10\log_{10} \frac{{{P_{{\text{e}},m}}}}{{{{\bf{h}}_{{\text{uav}},m}^H}{\bf{FF}}_{}^H{\bf{h}}_{{\text{uav}},m} + \sigma _{{\text{uav}},m}^2}},
\end{align}
where ${P_{{\text{e}},m}}$ is the received power of the $m$-th unauthorized UAV from its associated controller.
We assume that BS knows the perfect $\mathbf{H}$.
Then, the JCJ problem can be formulated as
\begin{align}\label{eq_P1}
	\mathcal{P}1:
	\begin{array}{l}
		\arg {\mathop{\min}\limits_{\bf{F}}} \qquad\qquad\quad {P_\text{t}}\nonumber\\
		\qquad{\text{s.t.}}\quad \;C_{1,1}:{R}_n \ge {R_{{\text{th}},n}},\forall n\in1,\ldots,N_{\text{ue}},\nonumber\\
		\qquad\ \;\quad\;\;\;C_{1,2}:{\Gamma_{m}} \le \Gamma_{{\text{th}},m} ,\forall m\in1,\ldots,N_{\text{uav}},	
	\end{array}
\end{align}
where 
${P_\text{t}}=\text{tr}(\mathbf{F}\mathbf{F}^H)$ is the total transmit power,
${R_{{\text{th}},n}}$ and $\Gamma_{{\text{th}},m}$ are the expected achievable rate threshold of the $n$-th UE and the expected SINR threshold of the $m$-th unauthorized UAV, respectively.

\section{Proposed Joint Communication and Jamming Scheme}
\subsection{Problem Transformation}
Since the constraints of problem $\mathcal{P}1$ are quadratic, our objective is to convert $\mathcal{P}1$ into an SDP form.
Constraint $C_{1,1}$ can be expressed as
\begin{align}
\sigma _{{\text{ue}},n}^2({2^{{R_n}}} - 1)/{2^{{R_n}}} + {{\bf{h}}_{{\text{ue}},n}^H}{\bf{F}}{{\bf{F}}^H}{\bf{h}}_{{\text{ue}},n}({2^{{R_n}}} - 1)/{2^{{R_n}}} \nonumber \\ \le {{\bf{h}}_{{\text{ue}},n}^H}{{\bf{f}}_{{\text{ue}},n}}{\bf{f}}_{{\text{ue}},n}{\bf{h}}_{{\text{ue}},n}^H,
\end{align}
where we omit $\forall n\in1,\ldots,N_{\text{ue}}$ for simplicity, and the same applies to the subsequent formulation.
Similarly, $C_{1,2}$ can be expressed as
\begin{align}
	{{{\bf{h}}_{{\text{uav}},m}^H}{\bf{FF}}_{}^H{\bf{h}}_{{\text{uav}},m} \ge {P_{{\text{e}},m}}{{10}^{ - {\Gamma _{{\text{th}},m}}/10}} - \sigma _{{\text{uav}},m}^2}.
\end{align}
By defining
${\bf{f}} = {[{\bf{f}}_{{\text{ue}},1}^H\; \cdots \;{\bf{f}}_{{\text{ue}},{N_{{\text{ue}}}}}^H\;{\bf{f}}_{{\text{uav}},1}^H\; \cdots \;{\bf{f}}_{{\text{uav}},{N_{{\text{uav}}}}}^H]^H}\in \mathbb{C}^{N_sN_{\text{tx}}}$,
${{\bf{f}}_{{\text{uav,}}n}}={\bf{F}}[:,n+N_{\text{ue}}]$,
${\bf{\tilde F}} = {\bf{f}}{{\bf{f}}^H}\in \mathbb{C}^{N_sN_{\text{tx}}\times N_sN_{\text{tx}}}$,
\begin{align}
		{\bf{\tilde H}}_{1,n}^{} &= \left[ {\begin{array}{*{20}{c}}
				{{{\bf{h}}_{{\text{ue}},n}^H}}&{\bf{0}}& \cdots &{\bf{0}}\\
				{\bf{0}}& \ddots & \ddots & \vdots \\
				\vdots & \ddots &{{{\bf{h}}_{{\text{ue}},n}^H}}&{\bf{0}}\\
				{\bf{0}}& \cdots &{\bf{0}}&{{{\bf{h}}_{{\text{ue}},n}^H}}
		\end{array}} \right]\in \mathbb{C}^{N_s\times N_sN_{\text{tx}}},\nonumber\\
		{\bf{\tilde H}}_{2,n}^{} &= \left[ {\begin{array}{*{20}{c}}
				{{{\bf{0}}_{1 \times (n - 1){N_{{\text{tx}}}}}}}&{{{\bf{h}}_{{\text{ue}},n}^H}}&{{{\bf{0}}_{1 \times ({N_{{\text{ue}}}} - n + {N_{{\text{uav}}}}){N_{{\text{tx}}}}}}}
		\end{array}} \right]\nonumber\\ &\qquad\qquad\qquad\qquad\qquad\qquad\qquad\in {\mathbb{C}^{1 \times {N_s}{N_{{\text{tx}}}}}},\nonumber\\
		{\bf{\bar H}}_m^{} &= \left[ {\begin{array}{*{20}{c}}
				{{{\bf{h}}_{{\text{uav}},m}^H}}&{\bf{0}}& \cdots &{\bf{0}}\\
				{\bf{0}}& \ddots & \ddots & \vdots \\
				\vdots & \ddots &{{{\bf{h}}_{{\text{uav}},m}^H}}&{\bf{0}}\\
				{\bf{0}}& \cdots &{\bf{0}}&{{{\bf{h}}_{{\text{uav}},m}^H}}
		\end{array}} \right]\in \mathbb{C}^{N_s\times N_sN_{\text{tx}}},\nonumber
\end{align}
we can transform the problem $\mathcal{P}1$ equivalently to the problem $\mathcal{P}2$ as
\begin{align}\label{eq_P2}
	\mathcal{P}2:
	\begin{array}{*{20}{l}}
		{\arg \mathop {\min }\limits_{{\bf{\tilde F}}}\qquad  {\text{tr(}}{\bf{\tilde F}})}\\
		 {\text{s.t.}} \;\; \;C_{2,1}:{\text{tr}}({{\bf{A}}_{1,n}}{\bf{\tilde F}}) \ge {\gamma _{1,n}},\forall n=1,\ldots,N_{\text{ue}},\\
		{\quad \quad C_{2,2}:{\text{tr}}({{\bf{A}}_{2,m}}{\bf{\tilde F}}) \ge {\gamma _{2,m}}},\forall m=1,\ldots,N_{\text{uav}},\nonumber\\
		\qquad\qquad\quad{\bf{\tilde F}}\succeq \mathbf{0},\;{\text{rank(}}{\bf{\tilde F}}{\text{) = }}1,
	\end{array}
\end{align}
where
\begin{align}	 
		{{\bf{A}}_{1,n}} &= {\bf{\tilde H}}_{2,n}^H{\bf{\tilde H}}_{2,n}^{} - ({2^{{R_{{\text{th}},n}}}} - 1)/{2^{{R_{{\text{th}},n}}}}{\bf{\tilde H}}_{1,n}^H{\bf{\tilde H}}_{1,n}^{}\nonumber\\
		&\qquad\qquad\qquad\qquad\qquad\in \mathbb{C}{^{{N_s}{N_{{\text{tx}}}} \times {N_s}{N_{{\text{tx}}}}}},\\
		{{\bf{A}}_{2,m}} &= {\bf{\bar H}}_m^H{\bf{\bar H}}_m^{} \in \mathbb{C}{^{{N_s}{N_{{\text{tx}}}} \times {N_s}{N_{{\text{tx}}}}}},\\
		{\gamma _{1,n}} &= ({2^{{R_{{\text{th}},n}}}} - 1)/{2^{{R_{{\text{th}},n}}}}\sigma _{{\text{ue}},n}^2,\\
		{\gamma _{2,m}} &= {P_{{\text{e}},m}}{10^{ - {\Gamma _{{\text{th}},m}}/10}} - \sigma _{{\text{uav}},m}^2.
\end{align}

If the constraint ${\text{rank(}}{\bf{\tilde F}}{\text{) = }}1$ is removed, the problem $\mathcal{P}2$ is a standard SDP problem and can be solved by typical convex optimization tools, e.g., CVX\cite{ref_cvx}.
Therefore, we resort to semi-definite relaxation (SDR) and drop the rank constraint to obtain a relaxed version of $\mathcal{P}2$ as
\begin{align}\label{eq_P3}
	\mathcal{P}3:
	\begin{array}{*{20}{l}}
		{\arg \mathop {\min }\limits_{{\bf{\tilde F}}}\qquad  {\text{tr(}}{\bf{\tilde F}})}\\
		{\text{s.t.}} \;\; \;C_{3,1}:{\text{tr}}({{\bf{A}}_{1,n}}{\bf{\tilde F}}) \ge {\gamma _{1,n}},\forall n=1,\ldots,N_{\text{ue}},\\
		{\quad \quad C_{3,2}:{\text{tr}}({{\bf{A}}_{2,m}}{\bf{\tilde F}}) \ge {\gamma _{2,m}}},\forall m=1,\ldots,N_{\text{uav}},\nonumber\\
		\qquad\qquad\quad{\bf{\tilde F}}\succeq \mathbf{0}.
	\end{array}
\end{align}
Note that the power requirement for $\mathcal{P}3$ is less than or equal to that of $\mathcal{P}2$, since $\mathcal{P}2$ has an additional rank-1 constraint compared to $\mathcal{P}3$.
Although $\mathcal{P}3$ can be solved by convex optimization methods, the solution to $\mathcal{P}3$ often fails to satisfy the rank-1 constraint.
Consequently, transforming the solution to $\mathcal{P}3$ into the solution to $\mathcal{P}2$ poses a challenge.
The conventional approach entails applying eigenvalue decomposition to the solution to $\mathcal{P}3$ as described in \cite{ref_JSTSP_XianghaoYu}.
The eigenvector corresponding to the largest eigenvalue is then selected as the solution to $\mathcal{P}2$.
However, Corollary \ref{corollary_1} demonstrates that eigenvalue decomposition is not suitable for $\mathcal{P}3$.
First, by analyzing the structure of $\mathcal{P}3$, we can obtain Theorem \ref{theorem_1}, the proof of which is given in Appendix \ref{appendix_1}.
\begin{Theorem}\label{theorem_1}
	If $N_{\text{ue}}\ge 1$, the solution to $\mathcal{P}3$ satisfies ${\bf{\tilde F}} = {\text{blkdiag}}({{\bf{S}}_1},\ldots,{{\bf{S}}_{{N_{{\text{ue}}}}}},{\bf{0}}_{N_{\text{uav}}N_{\text{tx}}\times N_{\text{uav}}N_{\text{tx}}})$,
	where $\mathbf{S}_{n},n=1,\ldots,N_\text{ue},$ is an $N_\text{tx}\times N_\text{tx}$ symmetric matrix.
\end{Theorem}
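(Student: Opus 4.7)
The plan is to exploit the fact that every constraint matrix in $\mathcal{P}3$---namely $\tilde{\mathbf{H}}_{1,n}^H\tilde{\mathbf{H}}_{1,n}$, $\tilde{\mathbf{H}}_{2,n}^H\tilde{\mathbf{H}}_{2,n}$, and $\bar{\mathbf{H}}_m^H\bar{\mathbf{H}}_m$---is block diagonal with respect to the natural partition of $\mathbb{C}^{N_sN_{\text{tx}}}$ into $N_s$ blocks of size $N_{\text{tx}}$. Partitioning $\tilde{\mathbf{F}}=(\tilde{\mathbf{F}}_{ij})_{i,j=1}^{N_s}$ accordingly, my first step is to expand every trace in $\mathcal{P}3$ and verify that both the objective and all constraints depend only on the diagonal blocks $\tilde{\mathbf{F}}_{ii}$. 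The upshot is that the $m$-th UAV constraint reads $\sum_{i=1}^{N_s}\mathbf{h}_{\text{uav},m}^H\tilde{\mathbf{F}}_{ii}\mathbf{h}_{\text{uav},m}\ge\gamma_{2,m}$---symmetric in $i$---while the $n$-th UE constraint weights $\tilde{\mathbf{F}}_{nn}$ by the useful-signal coefficient $1/2^{R_{\text{th},n}}$ and every other $\tilde{\mathbf{F}}_{ii}$ by the interference coefficient $-\beta_n:=-(2^{R_{\text{th},n}}-1)/2^{R_{\text{th},n}}$.

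Second, I would invoke the fact that the diagonal blocks of a PSD matrix are themselves PSD to conclude that replacing any feasible $\tilde{\mathbf{F}}$ by $\text{blkdiag}(\tilde{\mathbf{F}}_{11},\ldots,\tilde{\mathbf{F}}_{N_sN_s})$ yields a feasible, PSD matrix with identical objective value, so without loss of generality the optimizer is block diagonal. The crux of the argument is the third step: absorb each UAV-indexed diagonal block into a UE block. For each $k>N_{\text{ue}}$, define $\tilde{\mathbf{F}}'_{11}:=\tilde{\mathbf{F}}_{11}+\tilde{\mathbf{F}}_{kk}$ and $\tilde{\mathbf{F}}'_{kk}:=\mathbf{0}$, keeping every other block fixed (this uses $N_{\text{ue}}\ge 1$). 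Trace and PSDness are preserved, and the UAV constraints are untouched since they sum over all $i$. UE constraints for $n>1$ are also unchanged, because both $\tilde{\mathbf{F}}_{11}$ and $\tilde{\mathbf{F}}_{kk}$ enter those constraints as interference with coefficient $-\beta_n$. Only UE~1's constraint shifts, and the contribution of $\tilde{\mathbf{F}}_{kk}$ transitions from interference with coefficient $-\beta_1$ to useful signal with coefficient $1-\beta_1$; the identity $(1-\beta_1)+\beta_1=1$ forces the net change to equal $+\mathbf{h}_{\text{ue},1}^H\tilde{\mathbf{F}}_{kk}\mathbf{h}_{\text{ue},1}\ge 0$, so that constraint is only relaxed. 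Iterating over $k=N_{\text{ue}}+1,\ldots,N_s$ drives every UAV diagonal block to zero while preserving optimality.

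The hypothesis $N_{\text{ue}}\ge 1$ is exactly what is needed for a UE block to be available to absorb the UAV content; the PSD (and hence Hermitian, i.e., ``symmetric'') structure of each $\mathbf{S}_n$ is inherited directly from $\tilde{\mathbf{F}}\succeq\mathbf{0}$. The main obstacle I anticipate is the coefficient bookkeeping in the third step---correctly identifying how each diagonal block enters each UE constraint and spotting the algebraic coincidence $(1-\beta_n)+\beta_n=1$ that makes the transfer strictly non-detrimental. Once this identity is in hand, the rest is a routine iteration.
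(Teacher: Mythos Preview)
Your proposal is correct and takes a genuinely different route from the paper. The paper first proves two auxiliary lemmas---that principal submatrices of a PSD matrix are PSD, and that $\text{tr}(\mathbf{A}_1\mathbf{A}_2)\le 0$ whenever $\mathbf{A}_1\succeq\mathbf{0}$ and $\mathbf{A}_2\preceq\mathbf{0}$---and then argues by contradiction: if some UAV-indexed diagonal block of $\tilde{\mathbf{F}}$ were nonzero, its contribution to each UE constraint $C_{3,1}$ would be nonpositive (since the matching block of $\mathbf{A}_{1,n}$ is $-\beta_n\mathbf{h}_{\text{ue},n}\mathbf{h}_{\text{ue},n}^H\preceq\mathbf{0}$), forcing the UE-indexed blocks to be ``larger'' to compensate, which conflicts with minimizing $\text{tr}(\tilde{\mathbf{F}})$. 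Your argument instead is constructive: you explicitly transfer each UAV block $\tilde{\mathbf{F}}_{kk}$ into $\tilde{\mathbf{F}}_{11}$ and verify, via the coefficient identity $(1-\beta_1)+\beta_1=1$, that the objective and all constraints are preserved or relaxed. This is more elementary (you never need the trace inequality lemma) and makes the role of the hypothesis $N_{\text{ue}}\ge 1$ completely transparent. The only caveat is that your transfer preserves the objective exactly rather than strictly decreasing it, so what you actually establish is that \emph{some} optimal solution has the stated block-diagonal form---which is all the paper subsequently uses---whereas the paper's contradiction is phrased (somewhat informally) as though every optimum must. Both arguments handle the off-block-diagonal part the same way.
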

Corollary \ref{corollary_1} demonstrates that the solution to $\mathcal{P}3$ is definitely not the solution to $\mathcal{P}2$ when the number of UEs exceeds one.
\begin{Corollary}\label{corollary_1}
	If $N_{\text{ue}}\ge 2$, the solution to $\mathcal{P}2$ cannot be obtained by eigenvalue decomposition of the solution to $\mathcal{P}3$.
\end{Corollary}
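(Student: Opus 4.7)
My plan is to combine Theorem 1 with a rank-accumulation argument and a trace-comparison contradiction to show that the rank-1 matrix produced by the dominant-eigenvector recipe of \cite{ref_JSTSP_XianghaoYu} must be infeasible for $\mathcal{P}2$, and therefore cannot be its solution.

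First I would apply Theorem 1 to write $\tilde{\mathbf{F}}^\star=\mathrm{blkdiag}(\mathbf{S}_1,\ldots,\mathbf{S}_{N_{\text{ue}}},\mathbf{0})$ with each $\mathbf{S}_n\in\mathbb{C}^{N_{\text{tx}}\times N_{\text{tx}}}$ Hermitian positive semidefinite. The key intermediate step is to show that every block $\mathbf{S}_n$ is strictly nonzero at the optimum. Using the sparsity of $\tilde{\mathbf{H}}_{2,n}$ and the block-diagonal structure of $\tilde{\mathbf{H}}_{1,n}^H\tilde{\mathbf{H}}_{1,n}$, the constraint $C_{3,1}$ evaluated on this block-diagonal ansatz reduces to
\begin{equation*}
\mathbf{h}_{\text{ue},n}^H\mathbf{S}_n\mathbf{h}_{\text{ue},n}-\beta_n\sum_{k=1}^{N_{\text{ue}}}\mathbf{h}_{\text{ue},n}^H\mathbf{S}_k\mathbf{h}_{\text{ue},n}\ge\gamma_{1,n}>0,
\end{equation*}
with $\beta_n=(2^{R_{\text{th},n}}-1)/2^{R_{\text{th},n}}\in(0,1)$. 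If $\mathbf{S}_n=\mathbf{0}$, the left-hand side would be nonpositive since each $\mathbf{S}_k\succeq\mathbf{0}$, which contradicts $\gamma_{1,n}>0$. Hence $\mathrm{rank}(\mathbf{S}_n)\ge 1$ for every $n$, and because block ranks add, $\mathrm{rank}(\tilde{\mathbf{F}}^\star)\ge N_{\text{ue}}\ge 2$.

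Finally I would close the argument by a trace comparison. Let $\tilde{\mathbf{F}}_1:=\lambda_{\max}\mathbf{v}_{\max}\mathbf{v}_{\max}^H$ denote the rank-1 truncation produced by eigenvalue decomposition. Since $\tilde{\mathbf{F}}^\star$ is positive semidefinite with at least two nonzero eigenvalues, $\mathrm{tr}(\tilde{\mathbf{F}}_1)=\lambda_{\max}<\sum_i\lambda_i=\mathrm{tr}(\tilde{\mathbf{F}}^\star)$. Because $\mathcal{P}3$ is a relaxation of $\mathcal{P}2$, the two optimal values satisfy $\mathrm{tr}(\tilde{\mathbf{F}}^\star)\le p_{\mathcal{P}2}^\star$. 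If $\tilde{\mathbf{F}}_1$ were feasible for $\mathcal{P}2$, its objective would obey $\mathrm{tr}(\tilde{\mathbf{F}}_1)\ge p_{\mathcal{P}2}^\star\ge\mathrm{tr}(\tilde{\mathbf{F}}^\star)>\mathrm{tr}(\tilde{\mathbf{F}}_1)$, a contradiction; hence $\tilde{\mathbf{F}}_1$ is infeasible for $\mathcal{P}2$ and cannot be its solution. The main technical obstacle I anticipate is the second step, namely establishing $\mathbf{S}_n\ne\mathbf{0}$ for every $n$ without implicitly assuming that the rate thresholds are strictly positive; under the practical assumption $R_{\text{th},n}>0$ the positivity argument above is immediate, and the concluding trace-comparison step is then a clean deterministic consequence of the SDR ordering and the strict loss of trace incurred by truncating a multi-eigenvalue PSD matrix to its dominant direction.
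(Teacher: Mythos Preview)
Your proposal is correct and shares the paper's core idea: invoke Theorem~\ref{theorem_1} to obtain a block-diagonal optimum of $\mathcal{P}3$ and conclude that its rank is at least $N_{\text{ue}}\ge 2$. Where you differ is in rigor. First, you explicitly establish $\mathbf{S}_n\ne\mathbf{0}$ for every $n$ from feasibility of $C_{3,1}$, whereas the paper simply asserts that ``the rank of block-diagonal matrix is larger than or equal to the number of blocks,'' which tacitly assumes each block is nonzero. Second, and more substantively, the paper stops at the observation that $\mathrm{rank}(\tilde{\mathbf{F}}^\star)\ge 2$ while the solution to $\mathcal{P}2$ must have rank~1; this alone does not rule out that the rank-1 truncation of $\tilde{\mathbf{F}}^\star$ happens to coincide with the $\mathcal{P}2$ optimum. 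Your trace-comparison contradiction closes exactly that gap: truncation strictly lowers the trace, the relaxation ordering gives $\mathrm{tr}(\tilde{\mathbf{F}}^\star)\le p_{\mathcal{P}2}^\star$, so a feasible $\tilde{\mathbf{F}}_1$ would have objective strictly below the SDP optimum, which is impossible. The approaches therefore align in spirit, but your extra two steps make the argument complete where the paper's is elliptic.
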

\begin{proof}
	The rank of block-diagonal matrix is larger than or equal to the number of block.
	Since $N_{\text{ue}}\ge 2$, the rank of the solution to $\mathcal{P}3$ is larger than or equal to 2.
	However, the rank of the solution to $\mathcal{P}2$ needs to be 1.
	Therefore, we cannot obtain the solution to $\mathcal{P}2$ by computing the eigenvalue decomposition of the solution to $\mathcal{P}3$.
	That is, the eigenvector corresponding to the maximum eigenvalue of the solution to $\mathcal{P}3$ cannot form the solution to $\mathcal{P}2$.
	\qedhere
\end{proof}
Although $\mathcal{P}3$ can be solved easily by standard convex optimization, the discrepancy between $\mathcal{P}3$ and $\mathcal{P}2$ compels us to seek alternative methods.
By further analyzing the structure of $\mathcal{P}2$, we can obtain Theorem \ref{theorem_2}, whose proof is given in Appendix \ref{appendix_2}.

\begin{Theorem}\label{theorem_2}
	If $N_{\text{ue}}\ge 1$, the solution to $\mathcal{P}2$ satisfies ${\bf{\tilde F}} = {\text{blkdiag}}({\bf{\tilde S}},{\bf{0}}_{N_{\text{uav}}N_{\text{tx}}\times N_{\text{uav}}N_{\text{tx}}})$, where $\mathbf{\tilde S}$ is an $N_{\text{ue}}N_\text{tx}\times N_{\text{ue}}N_\text{tx}$ symmetric matrix.
\end{Theorem}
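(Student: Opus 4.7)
The plan is to exploit the rank-1 constraint together with the block-diagonal structure of $\mathbf{A}_{1,n}$ and $\mathbf{A}_{2,m}$. The rank-1 requirement forces $\tilde{\mathbf{F}} = \mathbf{f}\mathbf{f}^H$ for some $\mathbf{f}\in\mathbb{C}^{N_sN_{\text{tx}}}$; partitioning $\mathbf{f} = [\mathbf{f}_1^H,\ldots,\mathbf{f}_{N_s}^H]^H$ into $N_s$ sub-blocks of length $N_{\text{tx}}$, the claimed structure $\text{blkdiag}(\tilde{\mathbf{S}},\mathbf{0}_{N_{\text{uav}}N_{\text{tx}}\times N_{\text{uav}}N_{\text{tx}}})$ is equivalent to $\mathbf{f}_k=\mathbf{0}$ for every $k>N_{\text{ue}}$, i.e., every beamformer column dedicated to a UAV is zero. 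Theorem \ref{theorem_2} thus reduces to a support statement for the optimal $\mathbf{f}^*$.

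Using the block-diagonal forms of $\mathbf{A}_{1,n}$ and $\mathbf{A}_{2,m}$, $\mathcal{P}2$ becomes a QCQP in the sub-blocks with objective $\sum_k \|\mathbf{f}_k\|^2$, UE constraints $|\mathbf{h}_{\text{ue},n}^H\mathbf{f}_n|^2 - c_n\sum_k |\mathbf{h}_{\text{ue},n}^H\mathbf{f}_k|^2 \ge \gamma_{1,n}$ with $c_n=(2^{R_{\text{th},n}}-1)/2^{R_{\text{th},n}}$, and UAV constraints $\sum_k |\mathbf{h}_{\text{uav},m}^H\mathbf{f}_k|^2 \ge \gamma_{2,m}$. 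The qualitative observation, and the reason the theorem has any hope of being true, is that for any $k_0>N_{\text{ue}}$ the block $\mathbf{f}_{k_0}$ never plays the role of the desired-signal term $|\mathbf{h}_{\text{ue},n}^H\mathbf{f}_n|^2$ in any UE constraint and only contributes interference to UEs and jamming to UAVs, whereas a UE block $\mathbf{f}_n$ ($n\le N_{\text{ue}}$) additionally serves as the desired signal. This role asymmetry strongly suggests that any UAV-dedicated column can be dominated by folding its contribution into an existing UE column.

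The proof is then finished by an exchange argument. Assuming for contradiction that the optimum $\mathbf{f}^*$ has $\mathbf{f}_{k_0}^*\neq\mathbf{0}$ for some $k_0>N_{\text{ue}}$, introduce the perturbation $\mathbf{f}'_{k_0}=(1-\epsilon)\mathbf{f}_{k_0}^*$, $\mathbf{f}'_1=\mathbf{f}_1^*+\epsilon\mathbf{d}$, keeping the other sub-blocks unchanged. A first-order expansion in $\epsilon$ translates preservation of each tight constraint into a real linear condition on $\mathbf{d}\in\mathbb{C}^{N_{\text{tx}}}$: every tight UAV $m$ constraint demands $\text{Re}\bigl((\mathbf{h}_{\text{uav},m}^H\mathbf{f}_1^*)\overline{(\mathbf{h}_{\text{uav},m}^H\mathbf{d})}\bigr)\ge |\mathbf{h}_{\text{uav},m}^H\mathbf{f}_{k_0}^*|^2$, every tight UE $n\ne 1$ constraint demands the reverse inequality in $\mathbf{h}_{\text{ue},n}$, and the UE 1 constraint acquires the non-negative slack $2\epsilon c_1|\mathbf{h}_{\text{ue},1}^H\mathbf{f}_{k_0}^*|^2$ from shrinking $\mathbf{f}_{k_0}^*$, so that only the mild bound $\text{Re}\bigl((\mathbf{h}_{\text{ue},1}^H\mathbf{f}_1^*)\overline{(\mathbf{h}_{\text{ue},1}^H\mathbf{d})}\bigr)\ge -\tfrac{c_1}{1-c_1}|\mathbf{h}_{\text{ue},1}^H\mathbf{f}_{k_0}^*|^2$ remains; strict power reduction requires $\text{Re}(\mathbf{f}_1^{*H}\mathbf{d})<\|\mathbf{f}_{k_0}^*\|^2$. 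The hard part is to establish joint feasibility of this linear system in $\mathbf{d}$: because $\mathbf{d}$ carries $2N_{\text{tx}}$ real degrees of freedom while only a bounded number of constraints are tight at a regular optimum, the conditions are generically consistent, and one can in fact exhibit a specific solution of the form $\mathbf{d}\propto\mathbf{f}_{k_0}^*$ with a properly chosen phase and magnitude when the inner products $\mathbf{h}_{\text{uav},m}^H\mathbf{f}_1^*$ are non-degenerate. Such a $\mathbf{d}$ produces a feasible, strictly lower-power perturbation of $\mathbf{f}^*$, contradicting optimality and forcing $\mathbf{f}_k^*=\mathbf{0}$ for all $k>N_{\text{ue}}$, which is exactly the block structure claimed in Theorem \ref{theorem_2}.
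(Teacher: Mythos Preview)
Your overall strategy—reduce to the vector problem via $\tilde{\mathbf{F}}=\mathbf{f}\mathbf{f}^H$, suppose some UAV block $\mathbf{f}_{k_0}^*\neq\mathbf 0$, and exhibit a strictly better feasible point—is a reasonable attack, but it is not the paper's route, and the step you flag as ``the hard part'' is precisely where a genuine gap remains.

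The paper's proof is much shorter and avoids constructing an improving direction altogether. It exploits only sign structure: $\mathbf A_{1,n}$ is block-diagonal with every UAV-indexed diagonal block equal to $-c_n\,\mathbf h_{\text{ue},n}\mathbf h_{\text{ue},n}^H$, hence negative semi-definite. By Lemma~\ref{lemma_1} the corresponding principal submatrix $\tilde{\mathbf F}'$ of $\tilde{\mathbf F}$ is positive semi-definite, and by Lemma~\ref{lemma_2} one gets $\text{tr}(\mathbf A'\tilde{\mathbf F}')\le 0$, so a nonzero UAV portion can only tighten $C_{2,1}$ while inflating the objective; the paper then concludes directly that this forces the UE portion of $\tilde{\mathbf F}$ to be larger, contradicting minimality, without ever writing down a perturbation or a linear system in $\mathbf d$.

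Your exchange argument, by contrast, stands or falls on the joint feasibility of the first-order system for $\mathbf d$, and you supply only a dimension count plus the remark that $\mathbf d\propto\mathbf f_{k_0}^*$ works ``when the inner products $\mathbf h_{\text{uav},m}^H\mathbf f_1^*$ are non-degenerate.'' That caveat is exactly the hole. If $\mathbf h_{\text{uav},m}^H\mathbf f_1^*=0$ for some tight UAV constraint $m$ while $\mathbf h_{\text{uav},m}^H\mathbf f_{k_0}^*\ne 0$, the first-order change in that constraint is $-2\epsilon\,|\mathbf h_{\text{uav},m}^H\mathbf f_{k_0}^*|^2<0$ for every choice of $\mathbf d$, so no single-block perturbation of $\mathbf f_1$ can preserve feasibility; nothing you have written rules this case out. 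Even when all those inner products are nonzero, a degree-of-freedom heuristic does not establish that a single $\mathbf d$ simultaneously satisfies the UAV inequalities, the UE $n\neq1$ inequalities, and the strict power-decrease condition. To close your argument you would need either to prove the degenerate configuration cannot occur at an optimum, or to perturb several UE blocks at once and prove the enlarged linear system is always solvable. The paper bypasses all of this machinery with the two semi-definiteness lemmas.
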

Drawing from Theorem \ref{theorem_2}, we can directly obtain Corollary \ref{corollary_2} as follows.
\begin{Corollary}\label{corollary_2}
	If $N_{\text{ue}}\ge 1$, additional jamming streams are not required for jamming unauthorized UAVs.
\end{Corollary}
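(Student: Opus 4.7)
The plan is to extract Corollary \ref{corollary_2} as an almost immediate consequence of Theorem \ref{theorem_2}, using the rank-1 structure enforced by $\mathcal{P}2$. First, I would recall that every feasible $\tilde{\mathbf{F}}$ for $\mathcal{P}2$ admits the factorization $\tilde{\mathbf{F}}=\mathbf{f}\mathbf{f}^H$ by virtue of the constraints $\tilde{\mathbf{F}}\succeq \mathbf{0}$ and $\text{rank}(\tilde{\mathbf{F}})=1$. Consequently, the $i$-th diagonal entry of $\tilde{\mathbf{F}}$ equals $|f_i|^2$, so any zero diagonal entry forces the corresponding entry of $\mathbf{f}$ to vanish.

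Next I would invoke Theorem \ref{theorem_2}, which guarantees that the optimal $\tilde{\mathbf{F}}$ has the block form $\text{blkdiag}(\tilde{\mathbf{S}},\mathbf{0}_{N_{\text{uav}}N_{\text{tx}}\times N_{\text{uav}}N_{\text{tx}}})$. In particular, the last $N_{\text{uav}}N_{\text{tx}}$ diagonal entries of $\tilde{\mathbf{F}}$ are all zero, so combining with the rank-1 observation above gives that the last $N_{\text{uav}}N_{\text{tx}}$ entries of $\mathbf{f}$ are zero as well.

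Now I would unpack the definition of $\mathbf{f}$. Recalling
\begin{align*}
\mathbf{f} = [\mathbf{f}_{\text{ue},1}^H\,\cdots\,\mathbf{f}_{\text{ue},N_{\text{ue}}}^H\,\mathbf{f}_{\text{uav},1}^H\,\cdots\,\mathbf{f}_{\text{uav},N_{\text{uav}}}^H]^H,
\end{align*}
the vanishing lower block means $\mathbf{f}_{\text{uav},m}=\mathbf{0}_{N_{\text{tx}}}$ for every $m=1,\ldots,N_{\text{uav}}$. Since $\mathbf{f}_{\text{uav},m}=\mathbf{F}[:,N_{\text{ue}}+m]$ is exactly the beamforming column assigned to the dedicated jamming stream $s_{\text{uav},m}$, the decomposition $\mathbf{x}=\mathbf{F}\mathbf{s}$ reduces to $\mathbf{x}=\mathbf{F}[:,1\!:\!N_{\text{ue}}]\mathbf{s}_{\text{ue}}$, i.e. the jamming-stream contribution is identically zero at the antenna front-end. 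Therefore the unauthorized UAVs are jammed entirely through the beamformed UE data streams, which is precisely what Corollary \ref{corollary_2} asserts.

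The argument is essentially a one-line deduction once Theorem \ref{theorem_2} is in hand, so I do not anticipate any technical obstacle in the proof itself; the only subtlety worth stating explicitly is that the implication ``zero diagonal $\Rightarrow$ zero column/row'' hinges on the rank-1 (equivalently, positive semi-definite outer-product) structure of $\tilde{\mathbf{F}}$ in $\mathcal{P}2$, which is why the same conclusion would not follow from the solution of the relaxed problem $\mathcal{P}3$ alone.
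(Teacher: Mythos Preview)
Your proposal is correct and follows essentially the same route as the paper: invoke Theorem \ref{theorem_2} to get the zero lower-right block of the optimal $\tilde{\mathbf{F}}$, use the rank-1 factorization $\tilde{\mathbf{F}}=\mathbf{f}\mathbf{f}^H$ to conclude the last $N_{\text{uav}}N_{\text{tx}}$ entries of $\mathbf{f}$ vanish, and hence the dedicated jamming columns of $\mathbf{F}$ are zero. Your write-up is somewhat more explicit than the paper's (in particular the ``zero diagonal $\Rightarrow$ zero entry'' step and the interpretation via $\mathbf{x}=\mathbf{F}[:,1\!:\!N_{\text{ue}}]\mathbf{s}_{\text{ue}}$), but the argument is the same.
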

\begin{proof}
	Notice the constraint $\text{rank}(\mathbf{\tilde F})=1$ and $\mathbf{\tilde F}=\mathbf{\tilde f}\mathbf{\tilde f}^H$, where $\mathbf{\tilde f}\in \mathbb{C}^{N_sN_{\text{tx}}}$.
	Through Theorem \ref{theorem_2}, we know that ${\bf{\tilde F}}[{N_{{\text{ue}}}}{N_{{\text{tx}}}} + 1:{N_s}{N_{{\text{tx}}}},\;{N_{{\text{ue}}}}{N_{{\text{tx}}}} + 1:{N_s}{N_{{\text{tx}}}}] = {\bf{0}}_{N_{\text{uav}}N_{\text{tx}}\times N_{\text{uav}}N_{\text{tx}}}$.
	This means ${\bf{\tilde f}}[{N_{{\text{ue}}}}{N_{{\text{tx}}}} + 1:{N_s}{N_{{\text{tx}}}}] = {\bf{0}}_{N_{\text{uav}}N_{\text{tx}}}$, which indicates that additional jamming streams are not required.
	\qedhere
\end{proof}

By comparing Theorem \ref{theorem_1} and Theorem \ref{theorem_2}, we know that the solution to $\mathcal{P}2$ cannot be accurately obtained by the solution to $\mathcal{P}3$.
This implies that the SDR is loose.
Therefore, we should formulate a new problem to solve problem $\mathcal{P}2$.

\subsection{Problem Reformulation}
By comparing Theorem \ref{theorem_1} and Theorem \ref{theorem_2}, we know that SDR makes ${\bf{\tilde F}}[1:{N_{{\text{ue}}}}{N_{{\text{tx}}}},\;1:{N_{{\text{ue}}}}{N_{{\text{tx}}}}]$ block diagonalized.
If we consider using SDP to solve $\mathcal{P}2$, we can add some additional constraints on top of SDR to ensure that ${\bf{\tilde F}}[1:{N_{{\text{ue}}}}{N_{{\text{tx}}}},\;1:{N_{{\text{ue}}}}{N_{{\text{tx}}}}]$ is no longer block-diagonal.

To this end, we define $\mathbf{\bar P}_k \in \mathbb{C}^{N_sN_{\text{tx}}\times N_sN_{\text{tx}}},\; k = 1,\ldots,{N_{{\text{ue}}}}{N_{\text{tx}}} - 1,$ as
\begin{align}
	\mathbf{\bar P}_k = \left[ {\begin{array}{*{20}{c}}
			{{\text{circshift}}({{\bf{I}}_{{N_{{\text{ue}}}}{N_{{\text{tx}}}}}},k)}&{{{\bf{0}}_{{N_{{\text{ue}}}}{N_{{\text{tx}}}} \times {N_{{\text{uav}}}}{N_{{\text{tx}}}}}}}\\
			{{{\bf{0}}_{{N_{{\text{uav}}}}{N_{{\text{tx}}}} \times {N_{{\text{ue}}}}{N_{{\text{tx}}}}}}}&{{{\bf{0}}_{{N_{{\text{uav}}}}{N_{{\text{tx}}}}\times {N_{{\text{uav}}}}{N_{{\text{tx}}}}}}}
	\end{array}} \right],
\end{align}
where ${\text{circshift}}({{\bf{I}},k)}$ denotes a cyclic rightward shift of an identity matrix $\bf{I}$ by $k$ units.
Given ${\bf{\tilde F}} = {\bf{f}}{{\bf{f}}^H}\in \mathbb{C}^{N_sN_{\text{tx}}\times N_sN_{\text{tx}}}$, we can use Theorem \ref{theorem_3} to construct an additional constraint.
The proof of Theorem \ref{theorem_3} is given in Appendix \ref{appendix_3}.
\begin{Theorem}\label{theorem_3}
	For any $\mathbf{a}\in \mathbb{C}^{K}$, $\text{Re}(\text{tr}(\mathbf{A}))\ge \text{Re}(\text{tr}(\mathbf{AP}_k)), \forall k=0,\ldots,K-1$, where $\mathbf{A}=\mathbf{a}\mathbf{a}^H\in \mathbb{C}^{K\times K}$ and $\mathbf{P}_k={\text{circshift}}({\bf{ I}}_K,\;k)$.
\end{Theorem}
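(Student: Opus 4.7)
The plan is to expand both sides of the inequality elementwise in terms of the entries of $\mathbf{a}$, then bound the shifted diagonal sum using a standard AM-GM step together with the key observation that a cyclic shift is a bijection on the index set.

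First I would unpack the quantities. Since $\mathbf{A}=\mathbf{a}\mathbf{a}^H$, the $(i,j)$ entry is $A_{ij}=a_i\bar a_j$, so $\mathrm{tr}(\mathbf{A})=\sum_{i=1}^{K}|a_i|^2$, which is already real. Next, because $\mathbf{P}_k=\mathrm{circshift}(\mathbf{I}_K,k)$ is a permutation matrix whose nonzero entries pair column $j$ with a uniquely determined row $\sigma_k(j)$ obtained by cyclic shift, right-multiplication by $\mathbf{P}_k$ permutes the columns of $\mathbf{A}$. Consequently, the diagonal of $\mathbf{AP}_k$ collects entries of $\mathbf{A}$ along a diagonal offset by $k$ with cyclic wrap-around, and one obtains
\begin{align}
\mathrm{tr}(\mathbf{AP}_k)=\sum_{i=1}^{K}a_i\bar a_{\pi_k(i)},
\end{align}
where $\pi_k$ is a cyclic permutation of $\{1,\dots,K\}$ determined by the shift direction/amount.

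The second step is to take real parts and bound termwise. Using $\mathrm{Re}(z)\le|z|$ for any complex number gives
\begin{align}
\mathrm{Re}\bigl(a_i\bar a_{\pi_k(i)}\bigr)\le|a_i|\,|a_{\pi_k(i)}|,
\end{align}
and then AM-GM yields $|a_i|\,|a_{\pi_k(i)}|\le\tfrac{1}{2}\bigl(|a_i|^2+|a_{\pi_k(i)}|^2\bigr)$. Summing over $i$ and exploiting the fact that $\pi_k$ is a bijection, so $\sum_i|a_{\pi_k(i)}|^2=\sum_i|a_i|^2$, gives
\begin{align}
\mathrm{Re}\bigl(\mathrm{tr}(\mathbf{AP}_k)\bigr)\le\sum_{i=1}^{K}|a_i|^2=\mathrm{Re}\bigl(\mathrm{tr}(\mathbf{A})\bigr),
\end{align}
which is exactly the claim. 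The case $k=0$ is trivial since $\mathbf{P}_0=\mathbf{I}_K$, giving equality.

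I expect essentially no conceptual obstacle; the only care needed is in step one, namely pinning down the precise index map $\pi_k$ induced by the chosen convention of $\mathrm{circshift}$ (right vs.\ left, and whether it acts on rows or columns). Once the permutation structure is stated cleanly, the inequality is a one-line AM-GM followed by invariance of the sum $\sum_i|a_i|^2$ under reindexing by $\pi_k$. No additional properties of $\mathbf{A}$ beyond its rank-one, Hermitian PSD form are used, so the argument does not rely on any earlier theorem of the paper.
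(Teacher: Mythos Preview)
Your proof is correct. Both your argument and the paper's reduce the claim to showing $\sum_i |a_i|^2 \ge \mathrm{Re}\bigl(\sum_i a_i\bar a_{\pi(i)}\bigr)$ for a cyclic permutation $\pi$, but the route differs. The paper invokes the rearrangement inequality: it sorts $\mathbf{a}$ by magnitude, notes that any permutation gives $|\mathbf{a}_1^H\mathbf{a}_2|\le|\mathbf{a}_1^H\mathbf{a}_1|$, and then passes to real parts. Your argument is more elementary and self-contained: you bound each term via $\mathrm{Re}(a_i\bar a_{\pi(i)})\le|a_i||a_{\pi(i)}|\le\tfrac12(|a_i|^2+|a_{\pi(i)}|^2)$ and then use that $\pi$ is a bijection to collapse the sum. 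Your route avoids the somewhat delicate application of rearrangement to complex vectors (which, strictly speaking, still needs a triangle-inequality step the paper leaves implicit), and it makes transparent exactly which structural fact---bijectivity of the shift---drives the inequality. The paper's approach, on the other hand, packages the same idea into a single named inequality. Either way the proof is a couple of lines; your version is arguably cleaner for this specific statement.
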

By using Theorem \ref{theorem_3}, to ensure that ${\bf{\tilde F}}[1:{N_{{\text{ue}}}}{N_{{\text{tx}}}},\;1:{N_{{\text{ue}}}}{N_{{\text{tx}}}}]$ is no longer block-diagonal, we construct the additional constraint as
\begin{align}
	\text{Re}({\text{tr}}({\bf{\tilde F\bar P}}_k^{})) \ge \eta \text{Re}({\text{tr}}({\bf{\tilde F}})), \forall k=1,\ldots,N_{\text{ue}}N_{\text{tx}}-1,
\end{align}
where $\eta \in (0,1)$ is a parameter whose value is associated with $\mathbf{H}$.
The motivation behind this constraint is to make ${\bf{\tilde F}}[1:{N_{{\text{ue}}}}{N_{{\text{tx}}}},\;1:{N_{{\text{ue}}}}{N_{{\text{tx}}}}]$ no longer a block-diagonal matrix, thereby allowing ${\bf{\tilde F}}[1:{N_{{\text{ue}}}}{N_{{\text{tx}}}},\;1:{N_{{\text{ue}}}}{N_{{\text{tx}}}}]$ to be factorized into the outer product of two identical vectors.
Finally, the optimization problem can be reformulated as
\begin{algorithm}[!t]
	\label{alg_JCJ}
	\caption{Joint Communication and Jamming Scheme}%算法名字
	\LinesNumbered %要求显示行号
	\KwIn{number of BS antennas $N_{\text{tx}}$, number of UEs $N_{\text{ue}}$, number of unauthorized UAVs $N_{\text{uav}}$, channel $\mathbf{H}$, predefined  $\boldsymbol\Phi\in\mathbb{R}^{V}$, predefined achievable rate threshold for UEs $R_{\text{th},n}, \forall n=1,\ldots,N_{\text{ue}}$, predefined SINR threshold for unauthorized UAVs $\Gamma_{\text{th},m}, \forall m=1,\ldots,N_{\text{uav}}$.
	}%输入参数
	\KwOut{beamformer $\mathbf{\bar F}$
	}%输出
	
	$\text{Error}_{\text{all}} = [\ ]$\;
	$\mathbf{\hat f}_{\text{all}} = [\ ]$\;
	\For{$v=1,2,\ldots,V$}{		$\eta=\boldsymbol\Phi[v]$\;
		Use CVX to solve $\mathcal{P}3'$ and obtain $\mathbf{\tilde F}_v$\;			
		Select the eigenvector $\mathbf{\hat f}_v$ corresponding to the maximum eigenvalue of $\mathbf{\tilde F}_v$\;
		$\mathbf{\hat f}_{\text{all}} = [\mathbf{\hat f}_{\text{all}}\ \mathbf{\hat f}_v]$\;
		Obtain the possible solution to $\mathcal{P}3'$ from (\ref{eq_solution4})\;
		Obtain ture achievable rate and SINR from (\ref{eq_true_R_gamma})\;
		Compute $\text{Error}_v$ from (\ref{eq_error_v})\;
		$\text{Error}_{\text{all}} = [\text{Error}_{\text{all}}\ \text{Error}_v]$			
	}
	Select the index of the best $\eta$ as $\hat v$ from (\ref{eq_best_eta})\;
	Obtain the final beamformer from (\ref{eq_final_precoder})\;
	
\end{algorithm}
\begin{align}\label{eq_P4}
	\mathcal{P}3':
\begin{array}{*{20}{l}}
	{\arg \mathop {\min }\limits_{{\bf{\tilde F}}} \qquad {\text{tr(}}{\bf{\tilde F}})}\\
	{{\text{s.t.}}\ {C_{3,1}}:{\text{tr}}({{\bf{A}}_{1,n}}{\bf{\tilde F}}) \ge {\gamma _{1,n}},\forall n = 1, \ldots ,{N_{{\text{ue}}}}},\\
	\begin{array}{l}
		\quad  {C_{3,2}}:{\text{tr}}({{\bf{A}}_{2,m}}{\bf{\tilde F}}) \ge {\gamma _{2,m}},\forall m = 1, \ldots ,{N_{{\text{uav}}}},\\
		\quad  {C_{3,3}}:\text{Re}({\text{tr}}({{\bf{A}}_{3,k}}{\bf{\tilde F}})) \ge {\gamma _{3,k}},\forall k = 1,\ldots,{N_{{\text{ue}}}}{N_{{\text{tx}}}}-1,\nonumber
	\end{array}\\
	{\qquad \qquad \quad {\bf{\tilde F}}\succeq \mathbf{0}},
\end{array}
\end{align}
where ${{\bf{A}}_{3,k}} = {{\bf{\bar P}}_k} - \eta {\bf{ I}}_{N_sN_{\text{tx}}}$ and ${\gamma _{3,k}}=0$.

Based on $\mathcal{P}3'$, the JCJ scheme is described as follows.
First, we should predefine some possible values of $\eta$ in $\boldsymbol\Phi\in \mathbb{R}^{V}$, since we do not know which $\eta$ is the best to solve $\mathcal{P}3'$\footnote{
Determining a suitable value for $\eta$ based on $\mathbf{H}$ remains an open direction for future investigation.
}
.

Second, for each $\eta$ in ${\boldsymbol\Phi}$, we use CVX to solve $\mathcal{P}3'$ and obtain the corresponding $\mathbf{\tilde F}_v$.
$\mathbf{\tilde F}_v$ is the result corresponding to the $v$-th element in $\boldsymbol\Phi$.
The eigenvector $\mathbf{\hat f}_v\in \mathbb{C}^{N_sN_{\text{tx}}}$ corresponding to the maximum eigenvalue of $\mathbf{\tilde F}_v$ is obtained by eigenvalue decomposition.
All values of $\mathbf{\hat f}_v$ corresponding to ${\boldsymbol\Phi}$ are arranged as 
\begin{align}
	\hat{\mathbf{F}}_{\text{all}}=[\hat{\mathbf{f}}_1, \hat{\mathbf{f}}_2, \ldots, \hat{\mathbf{f}}_V]\in \mathbb{C}^{N_sN_{\text{tx}}\times V}.
\end{align}

Third, we obtain $\mathbf{\hat F}_v \in \mathbb{C}^{N_{\text{tx}}\times N_s}$ as
\begin{align}\label{eq_solution4}
	\mathbf{\hat F}_v[:,i]=\mathbf{\hat f}_v[(i-1)N_{\text{tx}}+1:iN_{\text{tx}}], i=1,\ldots,N_s.
\end{align}

Finally, we choose the $\mathbf{\hat F}_v$ that best satisfies $C_{3,1}$ and $C_{3,2}$ as the result of the proposed JCJ scheme.
Specifically, the true achievable rate and SINR can be obtained as
\begin{align}\label{eq_true_R_gamma}
		{{\hat \gamma }_{1,n,v}} &= {\text{tr}}({{\bf{A}}_{1,n}}{\bf{\hat F}}_v),\;n = 1,\ldots,{N_{{\text{ue}}}},\nonumber\\
		{{\hat \gamma }_{1,m,v}} &= {\text{tr}}({{\bf{A}}_{2,m}}{\bf{\hat F}}_v),\;m = 1,\ldots,{N_{{\text{uav}}}}.
\end{align}
We define the error corresponding to each $\eta$ as
\begin{align}\label{eq_error_v}
	\text{Error}_v=\max(|\gamma_{1,1}-{\hat\gamma}_{1,1,v}|,\ldots,|\gamma_{1,N_{\text{ue}}}-{\hat\gamma}_{1,N_{\text{ue}},v}|,\nonumber\\ |\gamma_{2,1}-{\hat\gamma}_{2,1,v}|,\ldots,|\gamma_{2,N_{\text{uav}}}-{\hat\gamma}_{2,N_{\text{uav}},v}|),
\end{align}
and all errors associated with $\boldsymbol\Phi$ are defined as $\text{Error}_{\text{all}}=[\text{Error}_1, \text{Error}_2, \ldots,\text{Error}_V]^T\in \mathbb{R}^{V}$.
After $\text{Error}_{\text{all}}$ has been obtained, we select the index of the best $\eta$ as
\begin{align}\label{eq_best_eta}
	\hat v=\arg \mathop {\min }\limits_v \ {\text{Erro}}{{\text{r}}_{{\text{all}}}}[v].
\end{align}
Therefore, the final beamformer $\mathbf{\bar F}\in \mathbb{C}^{N_{\text{tx}}\times N_s}$ is obtained as
\begin{align}\label{eq_final_precoder}
	\mathbf{\bar F}[:,i]=\mathbf{\hat F}_{\text{all}}[(i-1)N_{\text{tx}}+1:iN_{\text{tx}}, \hat v], i=1,\ldots,N_s.
\end{align}
The overall JCJ scheme is summarized in Algorithm \ref{alg_JCJ}.
\begin{Remark}
	In our simulations for problem $\mathcal{P}3'$, we adopt equality constraints for $C_{3,1}$ and $C_{3,2}$. 
	This is because, under equality constraints, the solution composed of the eigenvector corresponding to the largest eigenvalue often satisfies the equality constraints.
	In contrast, under inequality constraints, the solution, composed of the eigenvector corresponding to the largest eigenvalue, typically does not satisfy the inequality constraints.
\end{Remark}

\subsection{Complexity and Convergence Analysis}
SDP problems are commonly solved by the interior point method \cite{ref_SDP_interior_point,ref_SDP_SPM_ZhiQuanLuo},
which obtains an $\epsilon$-optimal solution after a sequence of iterations with the given $\epsilon$.
Therefore, the SDP problem $\mathcal{P}3'$ in (\ref{eq_P4}) can be solved with a worst case complexity of $\mathcal{O}({({N_{{\text{ue}}}}{N_{{\text{tx}}}})^{4.5}}\log (1/\epsilon))$\cite{ref_SDP_SPM_ZhiQuanLuo}.
Consequently, the complexity of the proposed JCJ scheme is $\mathcal{O}({({N_{{\text{ue}}}}{N_{{\text{tx}}}})^{4.5}}V\log (1/\epsilon))$.
Note that the complexity is not related to $N_{\text{uav}}$, as Theorem \ref{theorem_2} indicates that no additional dedicated jamming streams are required.

In the simulation, the problem is solved using the SDPT3 solver in CVX.
Considering that SDPT3 can determine the feasibility and convergence of the problem, the proposed method demonstrates strong robustness\cite{reftutuncusolving}.

\section{Simulation Results}\label{sec_Simulation}
\begin{figure*}[!t]
	\vspace{-5mm}
	\centering
	\color{black}
	\subfigure[$\mathbf{\mathord{\buildrel{\lower3pt\hbox{$\scriptscriptstyle\smile$}} \over F} }$: $N_{\text{ue}}=0$ and $N_{\text{uav}}=4$, solution to $\mathcal{P}3$.]{
		\begin{minipage}[t]{0.48\linewidth}
			\centering
			\includegraphics[width=3.3in]{./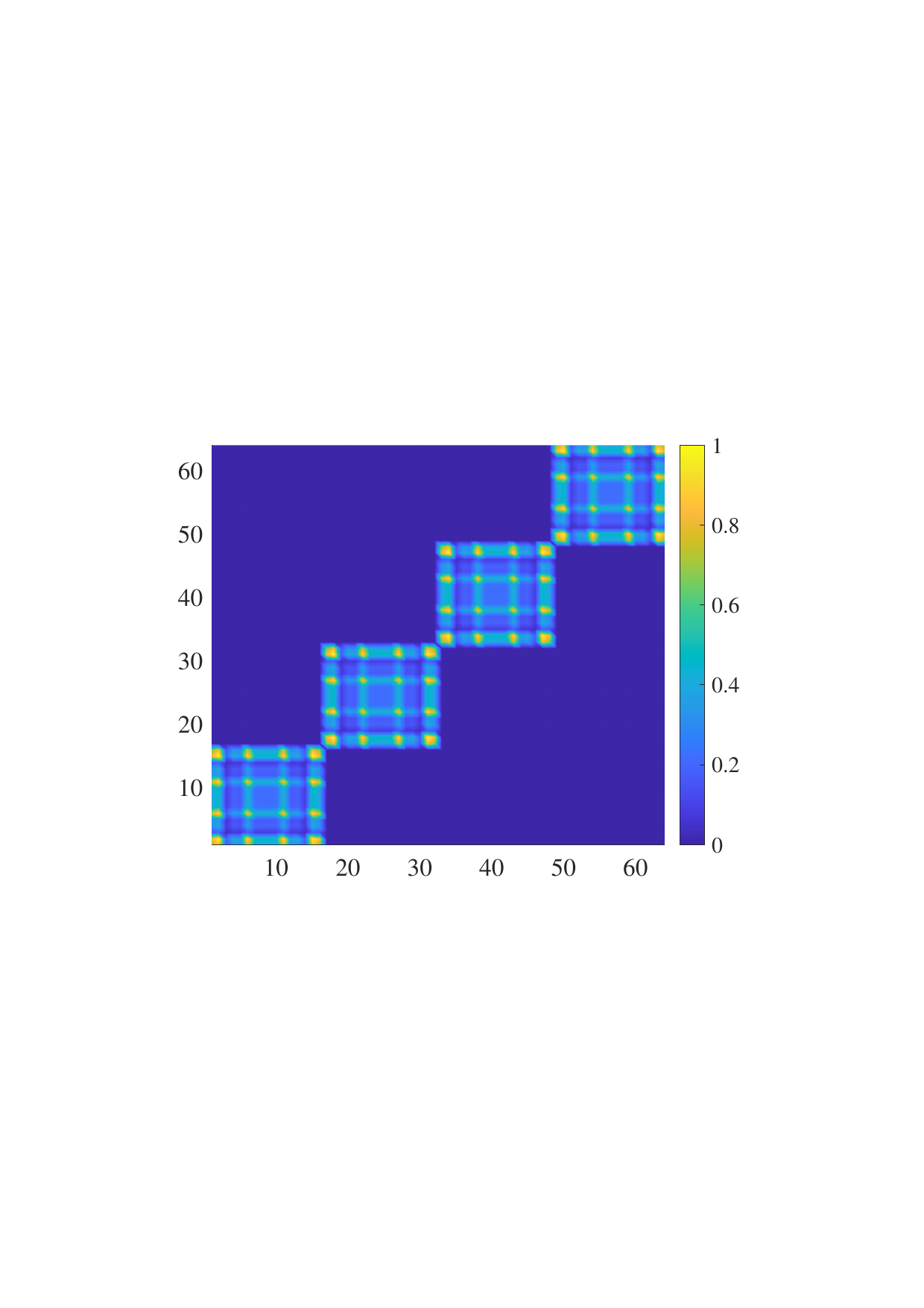}
			\label{fig_UE0_UAV4_Contraint12}
		\end{minipage}
	}
	\subfigure[$\mathbf{\mathord{\buildrel{\lower3pt\hbox{$\scriptscriptstyle\smile$}} \over F} }$: $N_{\text{ue}}=N_{\text{uav}}=2$, solution to $\mathcal{P}3$.]{
		\begin{minipage}[t]{0.48\linewidth}
			\centering
			\includegraphics[width=3.3in]{./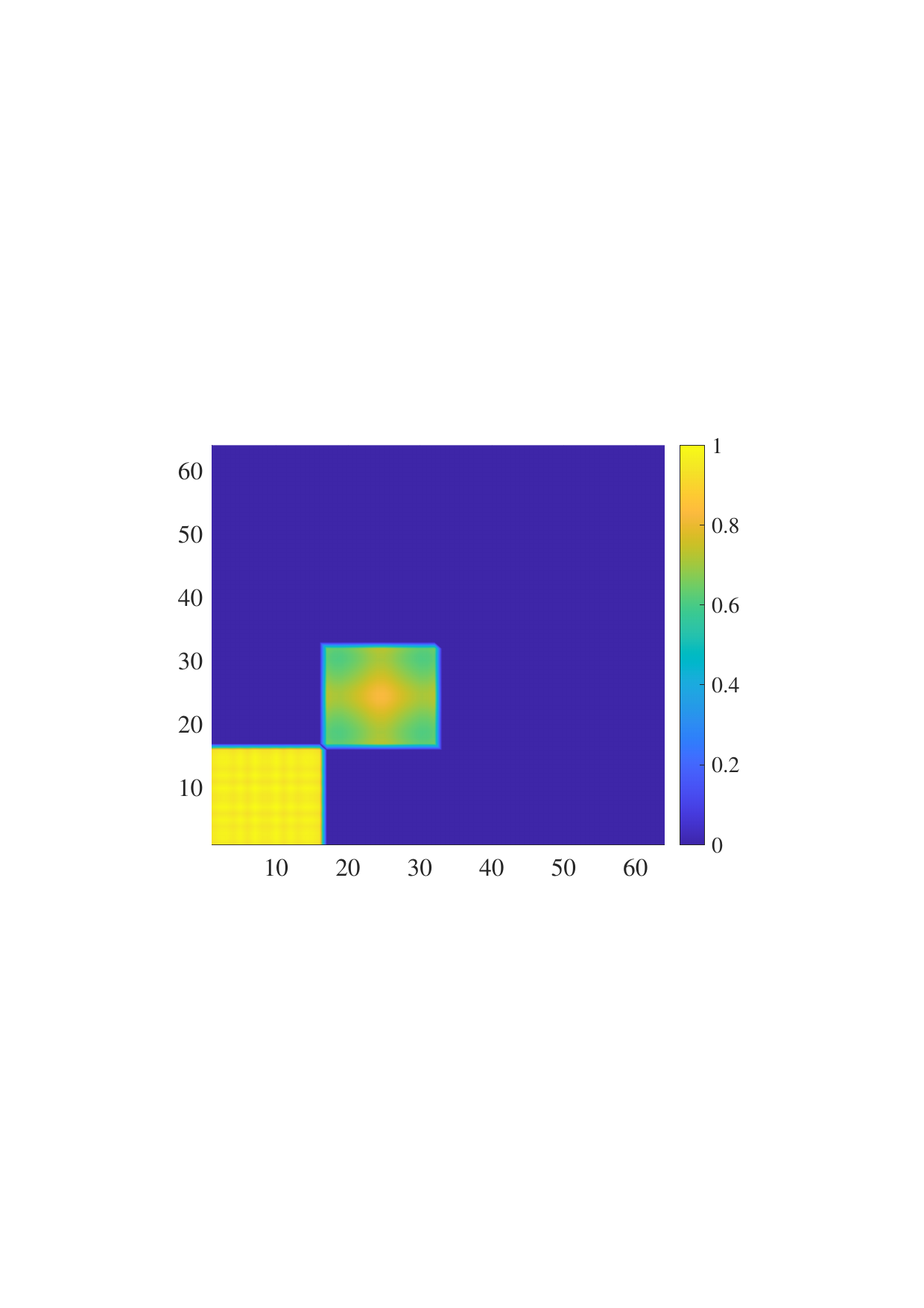}
			\label{fig_UE2_UAV2_Contraint12}
		\end{minipage}
	}
	\subfigure[$\mathbf{\tilde F}$: $N_{\text{ue}}=N_{\text{uav}}=2$, solution to $\mathcal{P}3'$.]{
		\begin{minipage}[t]{0.48\linewidth}
			\centering
			\includegraphics[width=3.3in]{./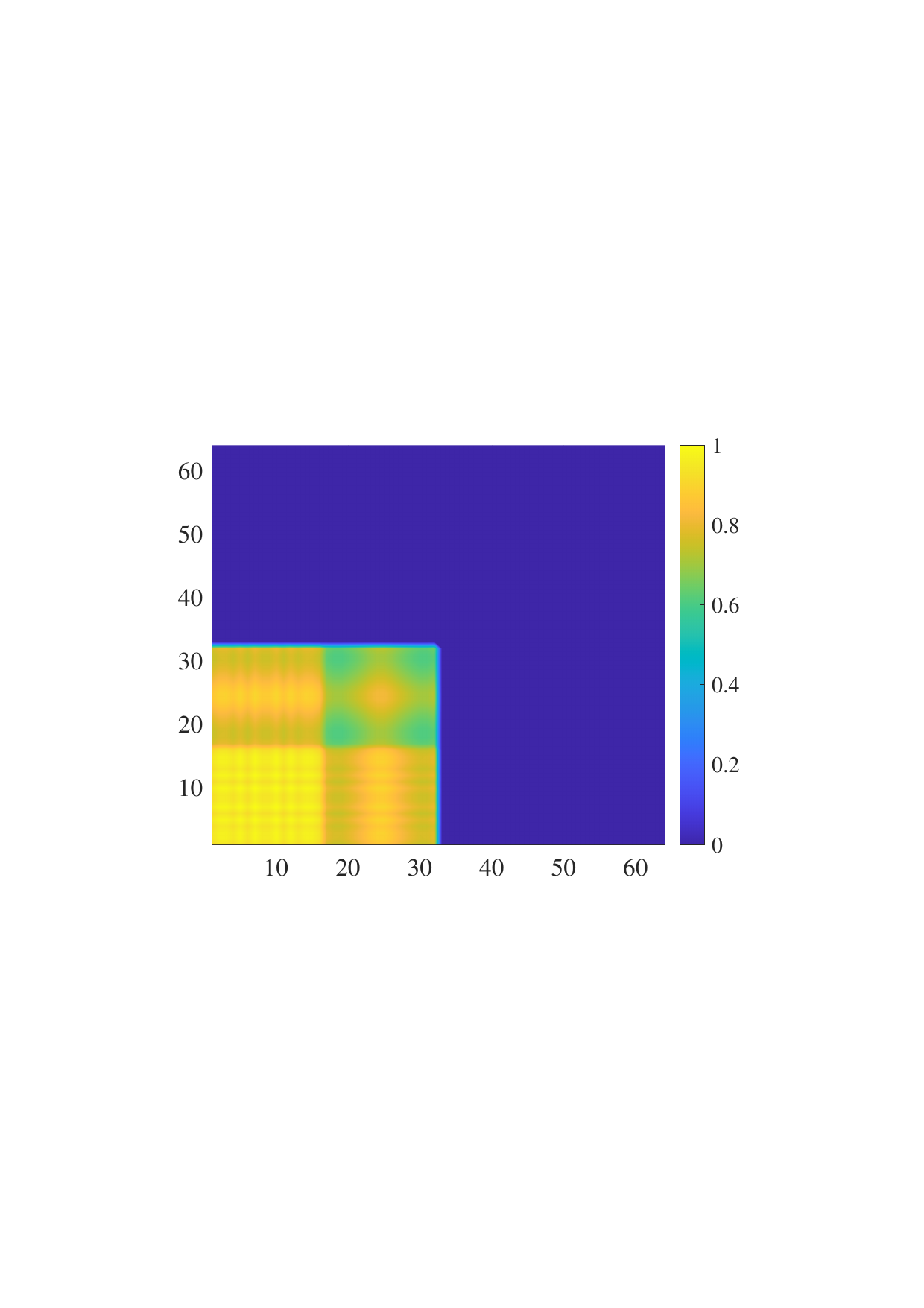}
			\label{fig_UE2_UAV2_Contraint123_SDP}
		\end{minipage}
	}
	\subfigure[$\mathbf{\bar F}$: $N_{\text{ue}}=N_{\text{uav}}=2$, output of Algorithm \ref{alg_JCJ}.]{
		\begin{minipage}[t]{0.48\linewidth}
			\centering
			\includegraphics[width=3.3in]{./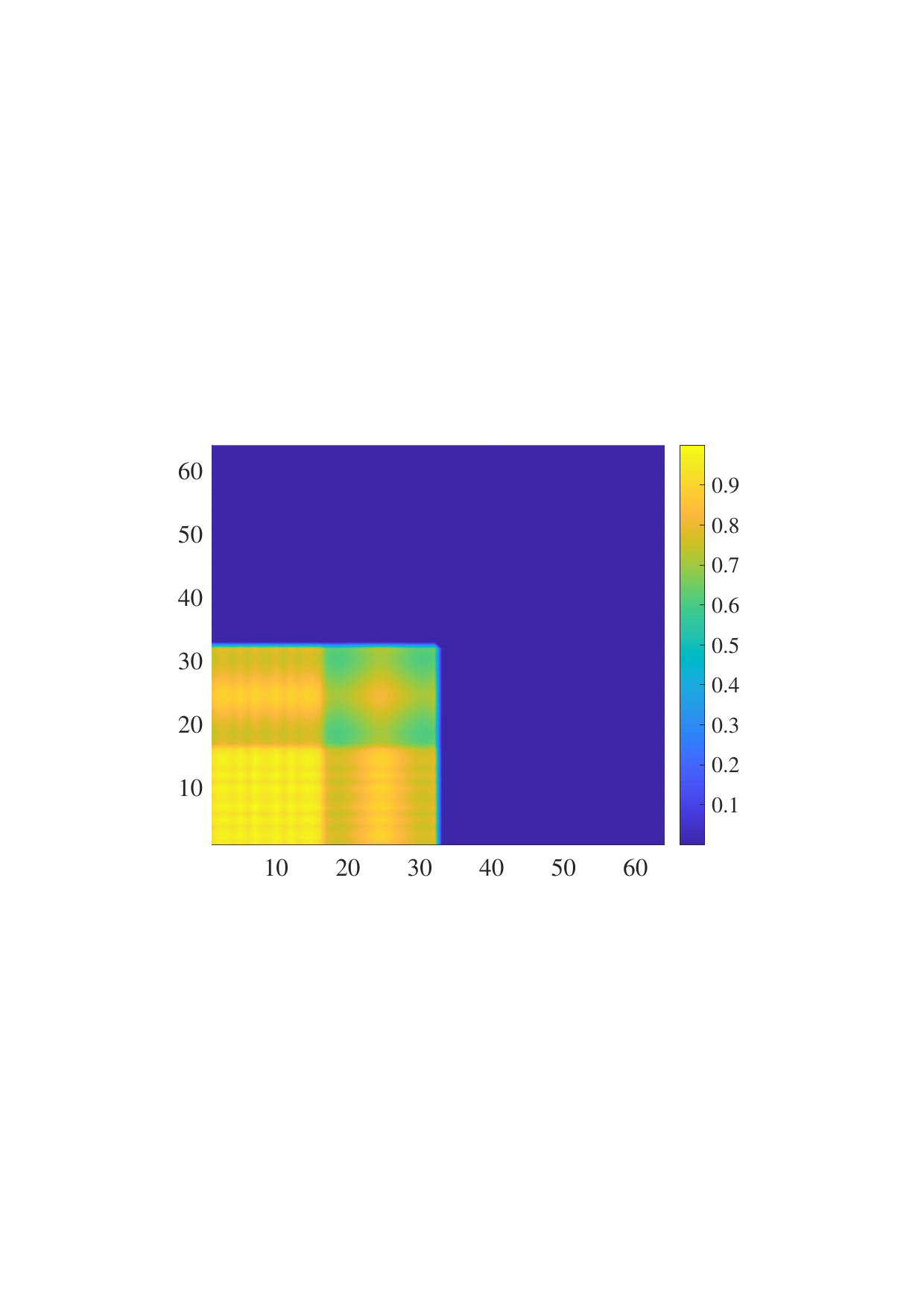}
			\label{fig_UE2_UAV2_Contraint123_SDP_EVD}
		\end{minipage}
	}
	\caption{The absolute value of $\mathbf{\mathord{\buildrel{\lower3pt\hbox{$\scriptscriptstyle\smile$}} \over F} }$, $\mathbf{\tilde F}$, and $\mathbf{\bar F}$ in a single realization, where $N_{\text{tx}}=16$. 
	(a) and (b) are the solution to $\mathcal{P}3$. (c) is the solution to $\mathcal{P}3'$. (d) is the output of Algorithm \ref{alg_JCJ}. (b)$\sim$(d) are investigated in the same system parameters settings, where $N_{\text{ue}}=N_{\text{uav}}=2$.
		In (a), $N_{\text{ue}}=0$ and $N_{\text{uav}}=4$.}
	\label{fig_rationality}
\end{figure*}
\begin{figure*}[!t]
	\vspace{-5mm}
	\centering
	\color{black}
	\subfigure[]{
		\begin{minipage}[t]{0.48\linewidth}
			\centering
			\includegraphics[width=3.3in]{./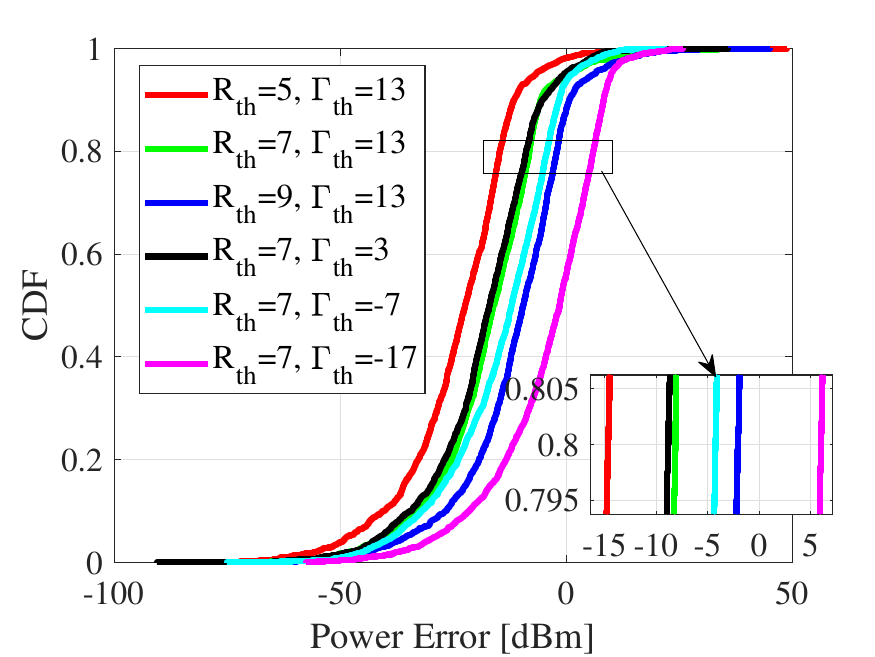}
			\label{fig_Threshold_PowerError}
		\end{minipage}
	}
	\subfigure[]{
		\begin{minipage}[t]{0.48\linewidth}
			\centering
			\includegraphics[width=3.3in]{./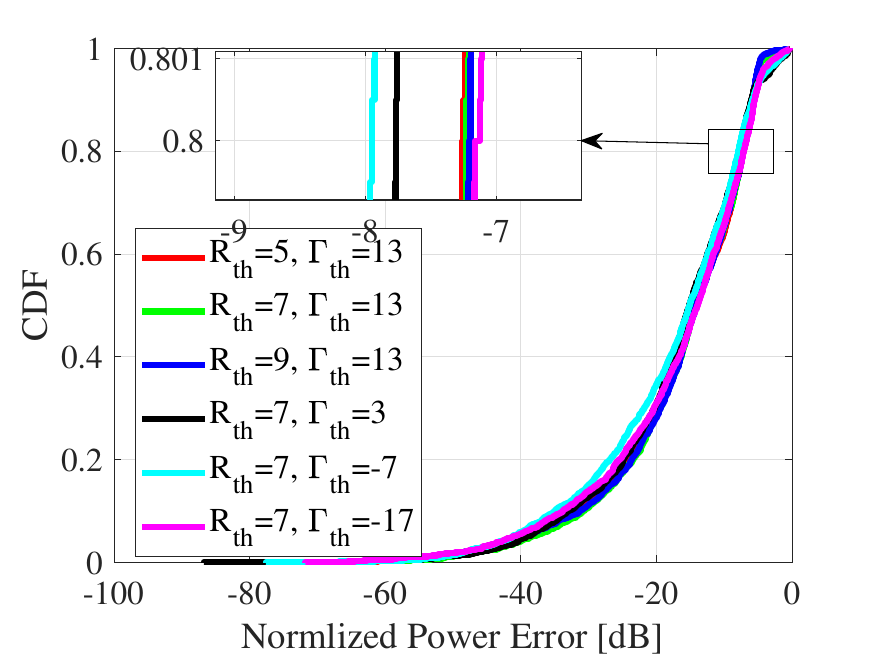}
			\label{fig_Threshold_PowerErrorNormalized}
		\end{minipage}
	}
	\subfigure[]{
		\begin{minipage}[t]{0.48\linewidth}
			\centering
			\includegraphics[width=3.3in]{./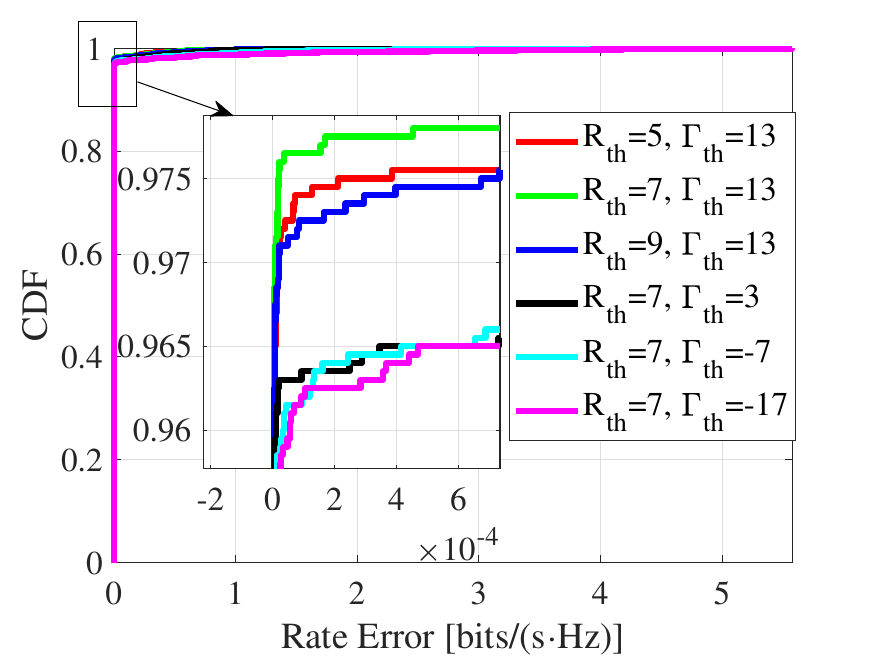}
			\label{fig_Threshold_RateError}
		\end{minipage}
	}
	\subfigure[]{
		\begin{minipage}[t]{0.48\linewidth}
			\centering
			\includegraphics[width=3.3in]{./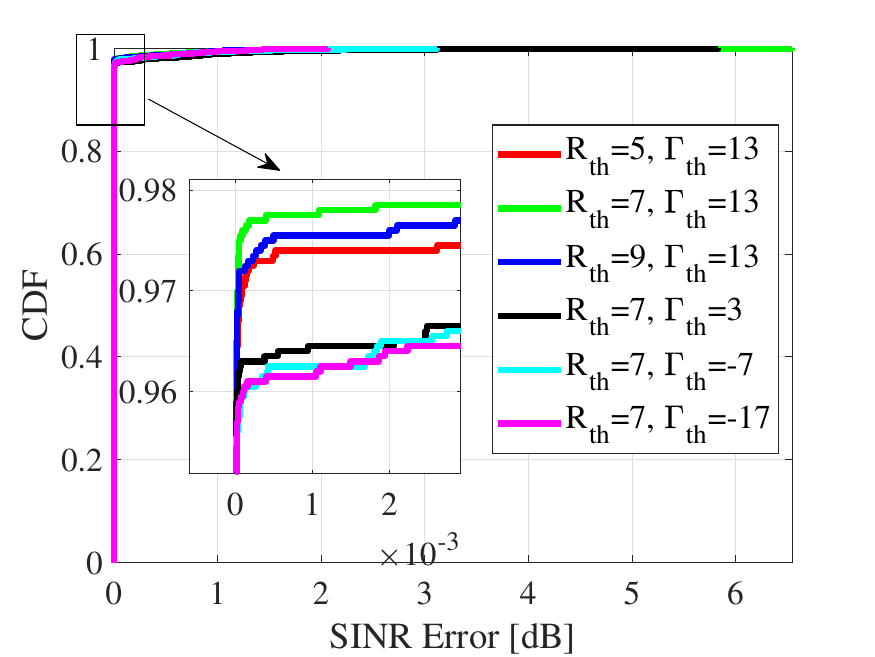}
			\label{fig_Threshold_SINRError}
		\end{minipage}
	}
	\caption{The performance of the JCJ scheme with varying $R_{\text{th}}$ and $\Gamma_{\text{th}}$.
		(a), (b), (c), and (d) are the performance of power error, normalized power error, rate error, and SINR error, respectively.}
	\label{fig_Threshold}
\end{figure*}
\begin{figure}[!t]
	%	\vspace{-5mm}
	\centering
	\color{black}
	\includegraphics[width=3.3in]{./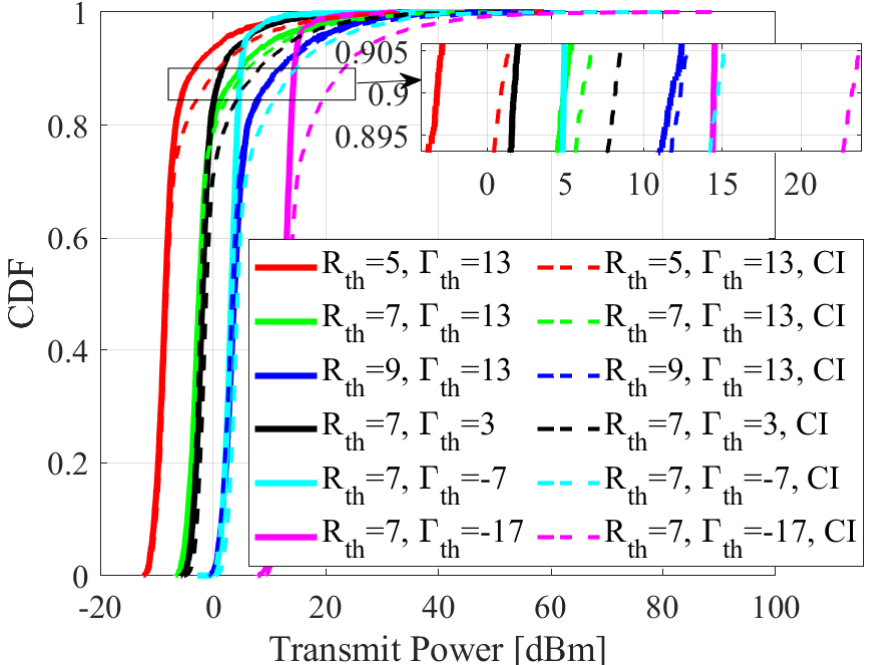}
	%	\vspace{-3mm}
	\caption{The transmit power difference of the JCJ scheme and the CI scheme with varying $R_{\text{th}}$ and $\Gamma_{\text{th}}$.
	}
	\label{fig_ChannelInversion_Threshold}
	%	\vspace{-6mm}
\end{figure}
\begin{figure*}[!t]
	\centering
	\color{black}
	\subfigure[]{
		\begin{minipage}[t]{0.48\linewidth}
			\centering
			\includegraphics[width=3.3in]{./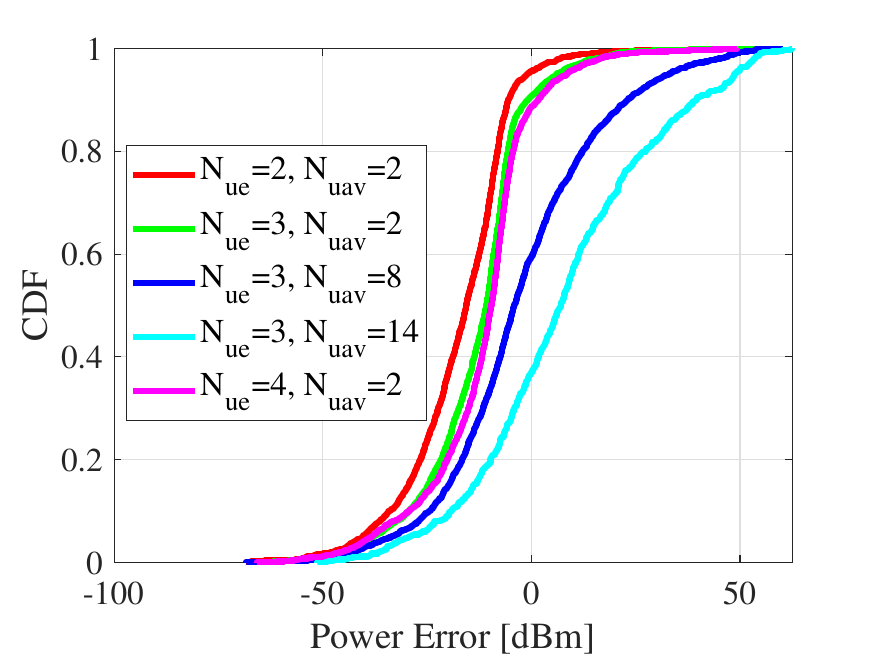}
			\label{fig_NumUEUAV_PowerError}
		\end{minipage}
	}
	\subfigure[]{
		\begin{minipage}[t]{0.48\linewidth}
			\centering
			\includegraphics[width=3.3in]{./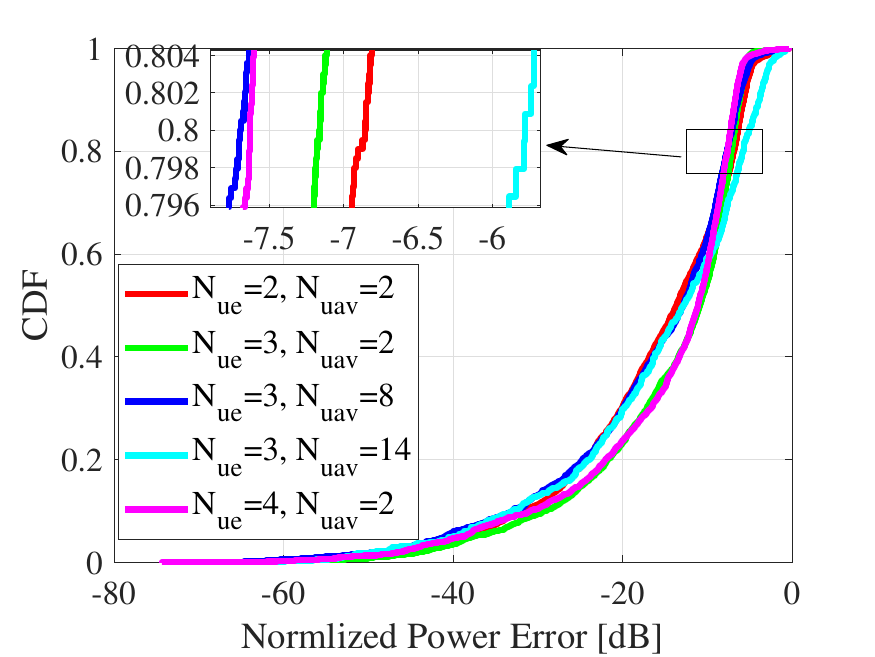}
			\label{fig_NumUEUAV_PowerErrorNormalized}
		\end{minipage}
	}
	\subfigure[]{
		\begin{minipage}[t]{0.48\linewidth}
			\centering
			\includegraphics[width=3.3in]{./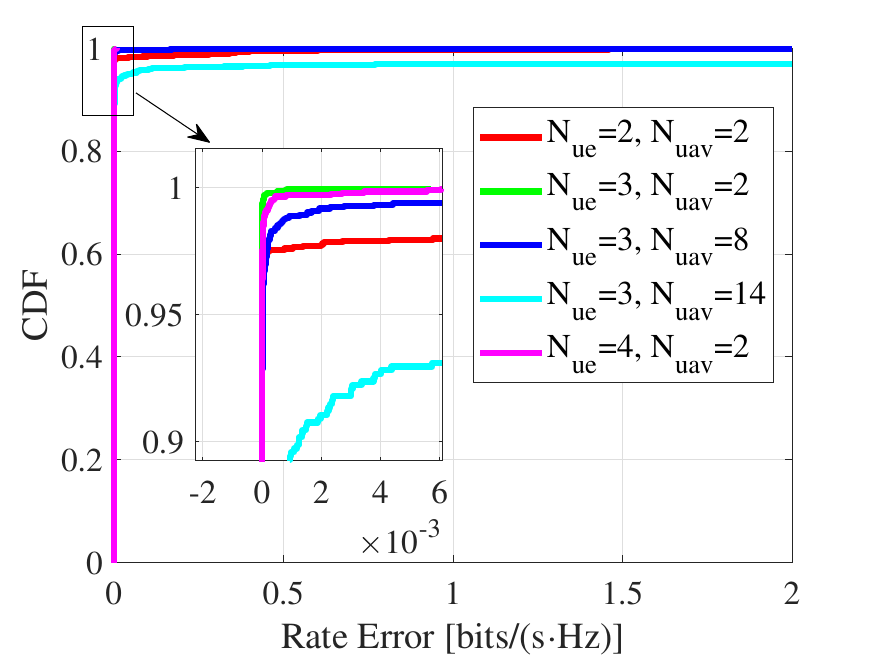}
			\label{fig_NumUEUAV_RateError}
		\end{minipage}
	}
	\subfigure[]{
		\begin{minipage}[t]{0.48\linewidth}
			\centering
			\includegraphics[width=3.3in]{./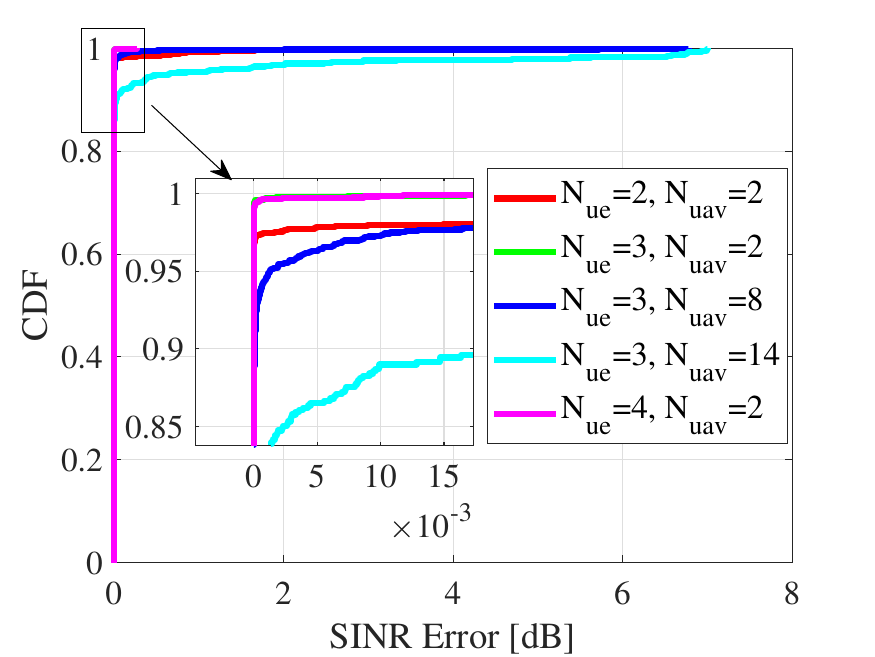}
			\label{fig_NumUEUAV_SINRError}
		\end{minipage}
	}
	\caption{The performance of the JCJ scheme with varying $N_{\text{ue}}$ and $N_{\text{uav}}$.
		(a), (b), (c), and (d) are the performance of power error, normalized power error, rate error, and SINR error, respectively.}
	\label{fig_NumUEUAV}
\end{figure*}
\begin{figure}[!t]
	%	\vspace{-5mm}
	\centering
	\color{black}
	\includegraphics[width=3.3in]{./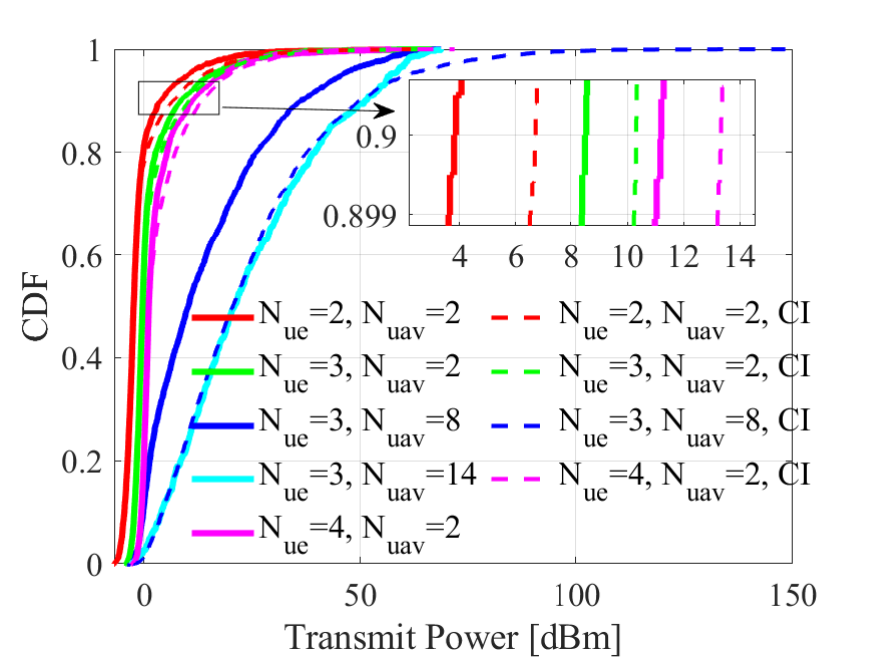}
	%	\vspace{-3mm}
	\caption{The transmit power difference of the JCJ scheme and the CI scheme with varying $N_{\text{ue}}$ and $N_{\text{uav}}$.}
	\label{fig_ChannelInversion_NumUEUAV}
	%	\vspace{-6mm}
\end{figure}
\begin{figure*}[!t]
	\centering
	\color{black}
	\subfigure[]{
		\begin{minipage}[t]{0.48\linewidth}
			\centering
			\includegraphics[width=3.3in]{./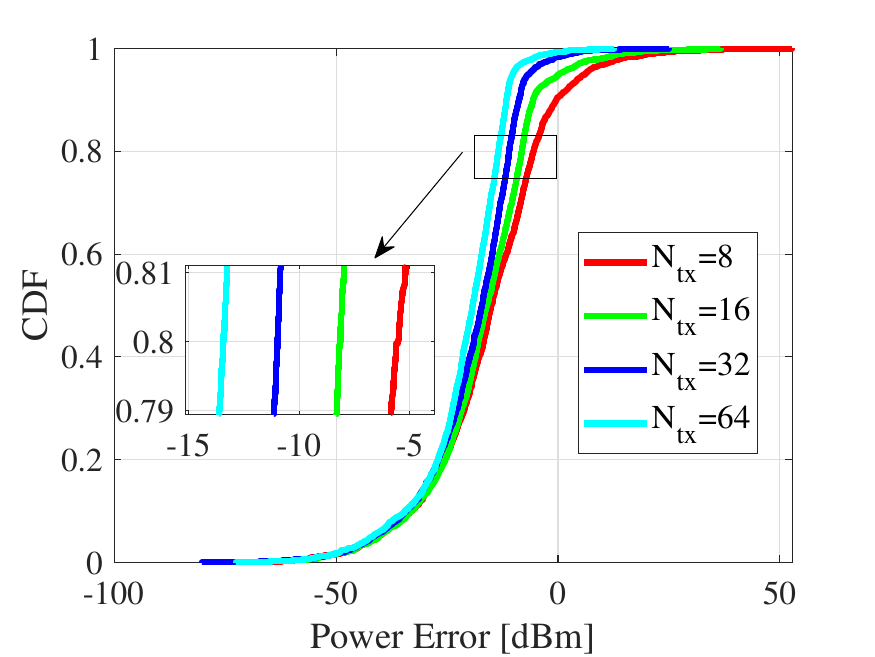}
			\label{fig_Nt_PowerError}
		\end{minipage}
	}
	\subfigure[]{
		\begin{minipage}[t]{0.48\linewidth}
			\centering
			\includegraphics[width=3.3in]{./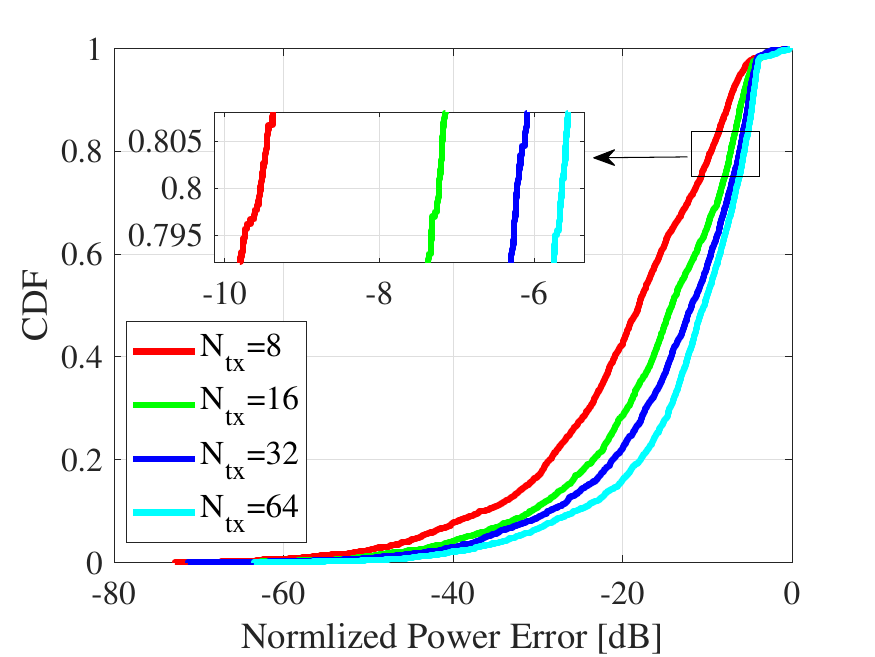}
			\label{fig_Nt_PowerErrorNormalized}
		\end{minipage}
	}
	\subfigure[]{
		\begin{minipage}[t]{0.48\linewidth}
			\centering
			\includegraphics[width=3.3in]{./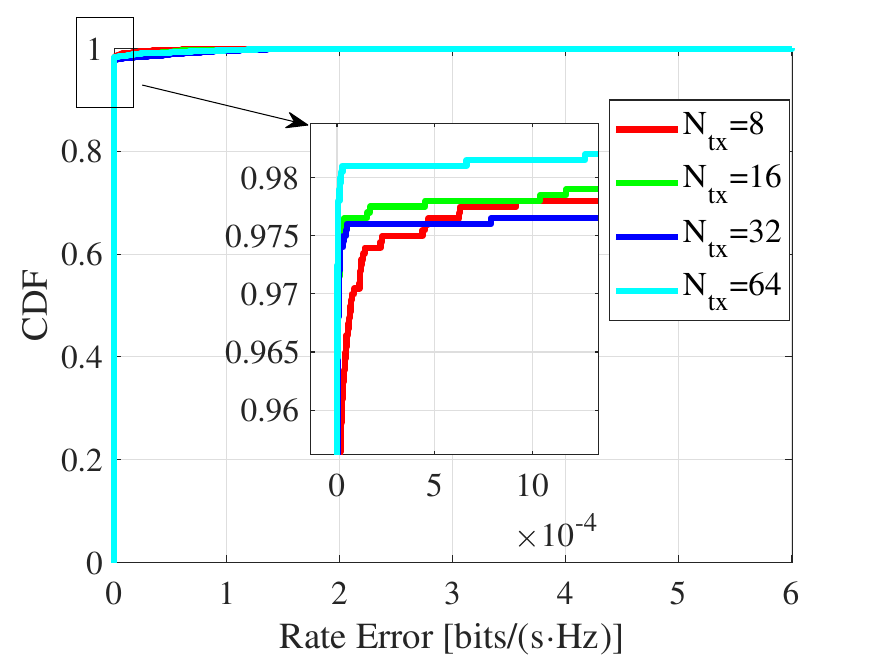}
			\label{fig_Nt_RateError}
		\end{minipage}
	}
	\subfigure[]{
		\begin{minipage}[t]{0.48\linewidth}
			\centering
			\includegraphics[width=3.3in]{./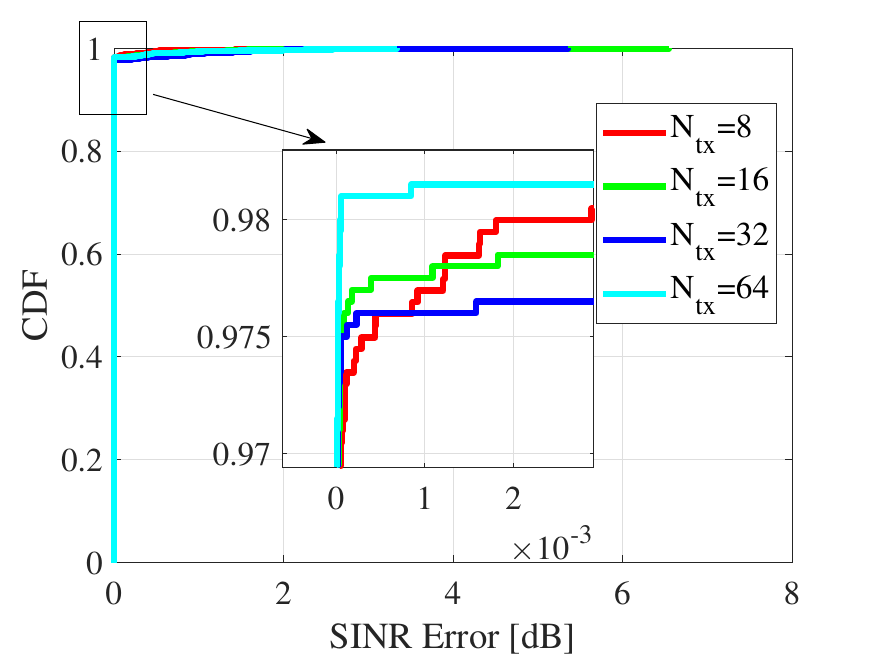}
			\label{fig_Nt_SINR}
		\end{minipage}
	}
	\caption{The performance of the JCJ scheme with varying $N_{\text{tx}}$.
		(a), (b), (c), and (d) are the performance of power error, normalized power error, rate error, and SINR error, respectively.}
	\label{fig_Nt}
\end{figure*}
\begin{figure}[!t]
	%	\vspace{-5mm}
	\centering
	\color{black}
	\includegraphics[width=3.3in]{./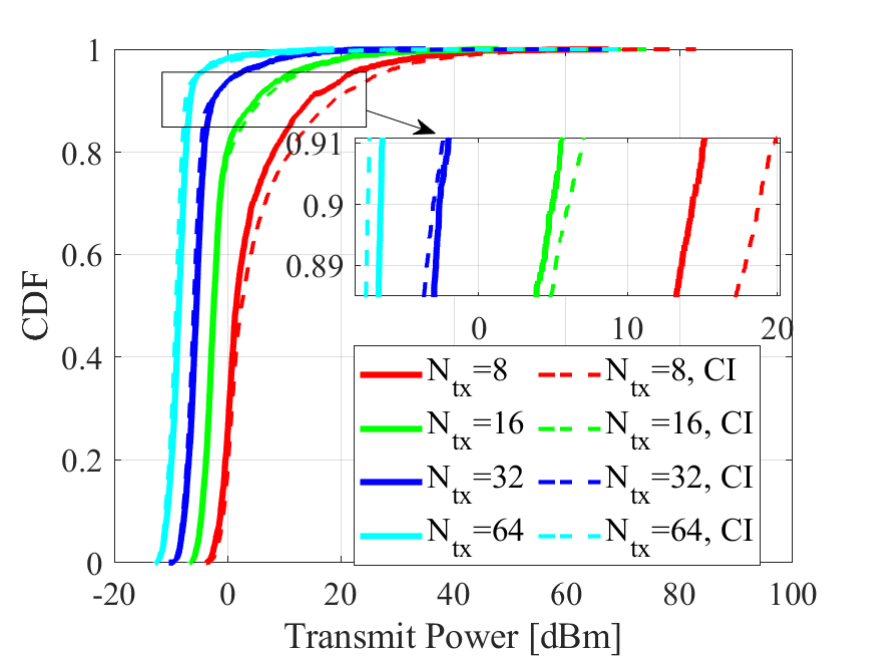}
	%	\vspace{-3mm}
	\caption{The transmit power difference of the JCJ scheme and the CI scheme with varying $N_{\text{tx}}$.}
	\label{fig_ChannelInversion_Nt}
	%	\vspace{-6mm}
\end{figure}
\subsection{System Parameters and Performance Metrics}
\begin{table}[!t]
	\caption{Simulation Parameters}	
	\label{table_para}
	\centering
	\begin{tabular}{llll}
		\hline \hline
		Meanning&Variable   & Value & Unit \\ \hline \hline
		Carrier Frequency&$f_c$       &    6   &         GHz                     \\
		Bandwidth&$B$          &   20    &       MHz         \\
		Number of UEs&$N_{\text{ue}}$      &   2    &         ---             \\
		{\begin{tabular}[l]{@{}l@{}}Number of \\ Unauthorized UAVs\end{tabular}}&$N_{\text{uav}}$       &   2    &         ---                \\
		{\begin{tabular}[l]{@{}l@{}}Expected Achievable \\ Rate Threshold\end{tabular}}  &$R_{\text{th},n}$       &   7    &        bit/(s$\cdot$Hz)              \\
		 {\begin{tabular}[l]{@{}l@{}}Expected SINR\\ Threshold\end{tabular}} &$\Gamma_{\text{th},m}$        &   13    &           dB             \\
		{\begin{tabular}[l]{@{}l@{}}Number of BS \\ Antennas\end{tabular}} &$N_\text{tx}$       &   16    &          ---             \\
		Range of UE (UAV)& $r_{\text{ue}}$ ($r_{\text{uav}}$)&     $\mathcal{U}(50,100)$ &             meter                \\
		AoD of UE (UAV)&$\theta_{\text{ue}}$ ($\theta_{\text{uav}}$) &  $\mathcal{U}(-60,60)$ &             degree               \\\hline \hline
	\end{tabular}
\end{table}
Key parameters are given in Table \ref{table_para}.
Unless otherwise specified, the parameter values listed in Table \ref{table_para} are used as the defaults.
$\boldsymbol\Phi=[0.01, 0.02, 0.03, 0.04, 0.05, 0.1, 0.2, 0.3, 0.4,$
$ 0.5, 0.6, 0.7, 0.8, 0.9]^T$.
The noise power spectrum density at the receiver is set as $N_0 =  - 174\;{\text{dBm/Hz}}$.
Therefore, $\sigma _{{\text{ue,}}n}^2=BN_0=-101\ \text{dBm},\;\forall n = 1,\ldots,{N_{{\text{ue}}}}$.
For simplicity, we also set $\sigma _{{\text{uav,}}m}^2=-101\  \text{dBm},\;\forall m = 1,\ldots,{N_{{\text{uav}}}}$.
$P_{\text{e},m}=-81$  dBm, $\forall m = 1,\ldots,{N_{{\text{uav}}}}$.
$\Gamma_{\text{th},m}$ is computed from the required BER of the $m$-th unauthorized UAV.
If the actual SINR of the $m$-th unauthorized UAV is less than or equal to $\Gamma_{\text{th},m}$, it can be assumed that $m$-th unauthorized UAV cannot communication reliably.
Specifically, the BER of $K$-quadrature amplitude modulation (QAM) can be computed as follows\cite{ref_DigitalComm}:
\begin{align}
	\text{BER} = 4Q\left(\sqrt{\frac{3\log_2K}{K-1}\frac{\mathcal{E}_{\text{bavg}}}{N_0}}\right),
\end{align}
where $K$ is the modulation order,
$\mathcal{E}_{\text{bavg}}$ is the energy per bit.
\begin{align}
	Q(x)=\frac12-\frac12\text{erf}\left(\frac x{\sqrt2}\right),
\end{align}
where $\text{erf}(\cdot)$ is the error function whose expression is 
\begin{align}
	\text{erf}(x)=\frac{2}{\sqrt{\pi}}\int_0^xe^{-t^2}dt.
\end{align}
We can obtain the relation between $\text{SNR}$ (or $\text{SINR}$) and $\text{BER}$ through the following transformation
\begin{align}
	\text{BER}&=4Q\left( {\sqrt {\frac{{3{{\log }_2}K}}{{K - 1}}\frac{{{{\cal E}_{{\text{bavg}}}}}}{{{N_0}}}} } \right)\nonumber\\
	&= 4Q\left( {\sqrt {\frac{3}{{K - 1}}\frac{{{{\cal E}_{{\text{avg}}}}}}{{{N_0}}}} } \right)\nonumber\\
	&= 4Q\left( {\sqrt {\frac{3}{{K - 1}}\frac{{{{\cal E}_{{\text{avg}}}}/(1/B)}}{{B{N_0}}}} } \right)\nonumber\\
	&= 4Q\left( {\sqrt {\frac{3}{{K - 1}}\frac{{{P_s}}}{{{P_n}}}} } \right) = 4Q\left( {\sqrt {\frac{3}{{K - 1}}{\text{SNR}}} } \right),	
\end{align}
where $\mathcal{E}_{\text{avg}}$ is the energy for each basic signal\cite{ref_DigitalComm}.
Consequently, to ensure unauthorized UAVs communication at a specified $\text{BER}$, $\text{SINR}$ must meet the following constraint
\begin{align}
	{\text{SINR}} \ge {\left[ {{Q^{ - 1}}\left( {\frac{{{\text{BER}}}}{4}} \right)} \right]^2}\frac{{K - 1}}{3}.	
\end{align}
If we set $\text{BER}=10^{-5}$ and $K=4$, the minimum $\text{SINR}$ is $13.19$ dB.
This accounts for the 13 dB listed in Table \ref{table_para}.

Since our goal is to minimize the transmit power, the transmit power serves as our performance metrics.
However, since locations of UEs and unauthorized UAVs are random, the transmit power varies in a large dynamic range.
Therefore, we adopt cumulative distributive function (CDF) to show the performance of the proposed JCJ scheme.
The data represented in the CDF curves from Fig. \ref{fig_Threshold} to Fig. \ref{fig_ChannelInversion_Nt} are obtained from 2000 random realizations.
For simplicity, we assume the uniform expected achievable rate threshold $R_{\text{th}}$ for all UEs and the uniform expected SINR threshold $\text{SINR}_{\text{th}}$ for all unauthorized UAVs.
We define the solution to $\mathcal{P}3$ as $\mathbf{\mathord{\buildrel{\lower3pt\hbox{$\scriptscriptstyle\smile$}} \over F} }$,
the solution to $\mathcal{P}3'$ as $\mathbf{\tilde F}$,
the output of Algorithm \ref{alg_JCJ} as $\mathbf{\bar F}$,
the actual achievable rate of the $n$-th UE as
\begin{align}
	{\bar {R}_n} = {\log _2}\left( {1 + \frac{{{{\bf{h}}_{{\text{ue,}}n}^H}{{\bf{\bar f}}_{{\text{ue,}}n}}{\bf{\bar f}}_{{\text{ue,}}n}^H{\bf{h}}_{{\text{ue,}}n}}}{{\sigma _{{\text{ue,}}n}^2 + {{\bf{h}}_{{\text{ue,}}n}^H}{\bf{\bar F}}{{\bf{\bar F}}^H}{\bf{h}}_{{\text{ue,}}n} - {{\bf{h}}_{{\text{ue,}}n}^H}{{\bf{\bar f}}_{{\text{ue,}}n}}{\bf{\bar f}}_{{\text{ue,}}n}^H{\bf{h}}_{{\text{ue,}}n}}}} \right),
\end{align}
where ${{\bf{\bar f}}_{{\text{ue,}}n}}={\bf{\bar F}}[:,n]$,
the actual SINR of the $m$-th unauthorized UAV as
\begin{align}
	{\bar {\Gamma} _{m}} = 10\log_{10} \frac{{{P_{{\text{e}},m}}}}{{{{\bf{h}}_{{\text{uav}},m}^H}{\bf{\bar F\bar F}}_{}^H{\bf{h}}_{{\text{uav}},m} + \sigma _{{\text{uav}},m}^2}}.
\end{align}
The power error in Fig. \ref{fig_Threshold_PowerError}, \ref{fig_NumUEUAV_PowerError}, and \ref{fig_Nt_PowerError} is defined as 
\begin{align}
	\text{Power Error} = |\text{tr}(\mathbf{\mathord{\buildrel{\lower3pt\hbox{$\scriptscriptstyle\smile$}} \over F} }\mathbf{\mathord{\buildrel{\lower3pt\hbox{$\scriptscriptstyle\smile$}} \over F} }^H) - \text{tr}(\mathbf{\bar F}\mathbf{\bar F}^H)|.
\end{align}
The normalized power error in Fig. \ref{fig_Threshold_PowerErrorNormalized}, \ref{fig_NumUEUAV_PowerErrorNormalized}, and \ref{fig_Nt_PowerErrorNormalized} is defined as 
\begin{align}
	\text{Normalized Power Error} = \frac{|\text{tr}(\mathbf{\mathord{\buildrel{\lower3pt\hbox{$\scriptscriptstyle\smile$}} \over F} }\mathbf{\mathord{\buildrel{\lower3pt\hbox{$\scriptscriptstyle\smile$}} \over F} }^H) - \text{tr}(\mathbf{\bar F}\mathbf{\bar F}^H)|}{\text{tr}(\mathbf{\bar F}\mathbf{\bar F}^H)}.
\end{align}
The rate error in Fig. \ref{fig_Threshold_RateError},  \ref{fig_NumUEUAV_RateError}, and \ref{fig_Nt_RateError} is defined as 
\begin{align}
	\mathop {\max }\limits_n \;|{\bar R_n} - {R_{{\text{th}}}}|.
\end{align}
The SINR error in Fig. \ref{fig_Threshold_SINRError}, \ref{fig_NumUEUAV_SINRError}, and \ref{fig_Nt_SINR} is defined as 
\begin{align}
	\mathop {\max }\limits_m \;|{\bar \Gamma _m} - {\Gamma _{{\text{th}}}}|.
\end{align}

\subsection{Baseline Scheme}

We adopt CI as our baseline scheme\cite{ref_03CM_MultiUserDnBF}.
We set $\mathbf{F}={{\bf{H}}^\dag }{\bf{\Lambda }}\in \mathbb{C}^{N_{\text{tx}}\times N_s}$, where ${{\bf{H}}^\dag }\in \mathbb{C}^{N_{\text{tx}}\times N_s}$ is the pseudo-inverse of $\mathbf{H}$, and $\boldsymbol{\Lambda }\in \mathbb{C}^{N_s\times N_s}$ is a diagonal matrix whose diagonal elements should be computed carefully to satisfy constraints $C_{1,1}$ and $C_{1,2}$.
Specifically, diagonal elements of $\bf{\Lambda }$ should satisfy the following constraints
\begin{align}
	{R_{{\text{th}}}} &= {\log _2}\left( {1 + \frac{{{{(\boldsymbol\Lambda [n,n])}^2}}}{{\sigma _{{\text{ue}},n}^2}}} \right),\;n = 1,\ldots,{N_{{\text{ue}}}},\\
	{\Gamma _{{\text{th}}}} &= 10{\log _{10}}\left( {\frac{{{P_{{\text{e}},m}}}}{{{{(\boldsymbol\Lambda [{N_{{\text{ue}}}} + m,{N_{{\text{ue}}}} + m])}^2} + \sigma _{{\text{uav}},m}^2}}} \right),\nonumber\\
	&\qquad\qquad\qquad	\qquad\qquad\qquad m = 1,\ldots,{N_{{\text{uav}}}}.
\end{align}

\subsection{Rationality of Problem Reformulation}\label{sec_rationality}
In Fig. \ref{fig_UE0_UAV4_Contraint12}, we can see that if there is no legitimate UE, the required number of streams should be $N_{\text{uav}}$.
Comparing Fig. \ref{fig_UE0_UAV4_Contraint12} with Fig. \ref{fig_UE2_UAV2_Contraint12}, \ref{fig_UE2_UAV2_Contraint123_SDP}, and \ref{fig_UE2_UAV2_Contraint123_SDP_EVD}, Corollary \ref{corollary_2} can be verified: if $N_{\text{ue}}\ge 1$, additional jamming streams are not required.

In Fig. \ref{fig_UE2_UAV2_Contraint12}, we can see that ${\bf{\tilde F}} = {\text{blkdiag}}({{\bf{S}}_1},\ldots,{{\bf{S}}_{{N_{{\text{ue}}}}}},$
${\bf{0}}_{N_{\text{uav}}N_{\text{tx}}\times N_{\text{uav}}N_{\text{tx}}})$,
where $\mathbf{S}_{n},n=1,\ldots,N_\text{ue}$ is an $N_\text{tx}\times N_\text{tx}$ symmetric matrix.
Therefore, Theorem \ref{theorem_1} is verified.

Comparing Fig. \ref{fig_UE2_UAV2_Contraint123_SDP} and \ref{fig_UE2_UAV2_Contraint123_SDP_EVD}, we can qualitatively analyze and ascertain that the solution to $\mathcal{P}3'$ and the output of Algorithm \ref{alg_JCJ} have no significant differences.
Quantitative analyses will be provided in Sections \ref{sec_sim_threshold},  \ref{sec_sim_NumUEUAV}, and \ref{sec_sim_Nt}.

\subsection{Impact of $R_{\text{th}}$ and $\Gamma_{\text{th}}$ on Transmit Power}\label{sec_sim_threshold}
In Fig. \ref{fig_Threshold_PowerError}, we can see that the larger $R_{\text{th}}$ or the smaller $\Gamma_{\text{th}}$, the larger power error.
Meanwhile, we can see the power error range is wide.
This can be attributed to the introduction of the additional constraint $C_{3,3}$ in problem $\mathcal{P}3'$.
While this constraint significantly increases the likelihood of satisfying the constraints of problem $\mathcal{P}2$, it also raises the transmit power.
However, considering only the power error is not comprehensive, as the transmit power itself is inherently higher when UEs' demands for the achievable rate or the SINR required to interfere with unauthorized UAVs is higher.
Therefore, the performance of normalized power error is indispensable.

In Fig. \ref{fig_Threshold_PowerErrorNormalized}, when $R_{\text{th}}$ and $\Gamma_{\text{th}}$ change, the variation of normalized power error is significantly less than that of the power error.
At a CDF value of 0.8, the normalized power error ranges from a minimum of $-8 \text{ dB}$ to a maximum of $-7 \text{ dB}$ for different values of $R_{\text{th}}$ and $\Gamma_{\text{th}}$.
This indicates that the proposed JCJ scheme serves as a rational approximation to $\mathcal{P}2$.
On the one hand, the solution of the proposed JCJ scheme satisfies the rank-1 constraint.
On the other hand, the power requirement for $\mathcal{P}3$ is less than or equal to that of $\mathcal{P}2$, since $\mathcal{P}2$ has an additional rank-1 constraint compared to $\mathcal{P}3$.

In Figs. \ref{fig_Threshold_RateError} and \ref{fig_Threshold_SINRError}, when the requirements for $R_{\text{th}}$ and $\Gamma_{\text{th}}$ change, the actual achievable rates and SINRs differ very slightly from the expected achievable rate threshold and SINR threshold.
This indicates that the proposed JCJ scheme is effective and applying eigenvalue decomposition to the solution to $\mathcal{P}3'$ as an approximation to the solution to $\mathcal{P}2$ is feasible.
Although the actual achievable rate and SINR deviate very slightly from the expected achievable rate and SINR, when zoomed in, it can be observed that an increased number of unauthorized UAVs leads to slightly larger deviations.

In Fig. \ref{fig_ChannelInversion_Threshold}, we can see that the proposed JCJ scheme has 1$\sim$8 dB gain over the conventional CI scheme in transmit power,
demonstrating the integration gain of the proposed JCJ scheme.
Integration here means the joint design of the beamformer for UEs and the beamformer for unauthorized UAVs.
When $R_{\text{th}}$ or $\Gamma_{\text{th}}$ decreases, the gain of the JCJ scheme compared with the CI scheme increases.
The inferiority of the CI scheme to the proposed JCJ scheme comes from the complete independence among different streams, and the need for additional jamming streams to counteract unauthorized UAVs.
Unlike the CI scheme, our proposed JCJ scheme does not require complete independence among streams.
Introducing appropriate levels of jamming while still meeting the overall achievable rate and SINR constraints is entirely feasible.

\subsection{Impact of $N_{\text{ue}}$ and $N_{\text{uav}}$ on Transmit Power}\label{sec_sim_NumUEUAV}
In Fig. \ref{fig_NumUEUAV_PowerError}, the more UEs and unauthorized UAVs, the larger the power error.
In Fig. \ref{fig_NumUEUAV_PowerErrorNormalized}, the normalized power error remains roughly invariant as the number of UEs and unauthorized UAVs are varied.
The normalized power error ranges from a minimum of $-7.5 \text{ dB}$ to a maximum of $-6 \text{ dB}$ for different numbers of UEs and unauthorized UAVs at a CDF value of 0.8.
This indicates that the proposed JCJ scheme serves as a rational approximation to $\mathcal{P}2$.

In Fig. \ref{fig_NumUEUAV_RateError}, when $N_{\text{ue}}$ is varied, the difference between the actual achievable rate and the expected achievable rate is small.
As $N_{\text{uav}}$ increases, the difference between the actual achievable rate and the expected achievable rate will increase.
However, there is still a 0.9 probability that the achievable rate error is within 0.001 bits/(s$\cdot$Hz), indicating that applying eigenvalue decomposition to $\mathcal{P}3'$ as an approximation to the solution of $\mathcal{P}2$ is feasible.

In Fig. \ref{fig_NumUEUAV_SINRError}, the variation of $N_{\text{ue}}$ has a smaller impact on the difference between the actual SINR and the expected SINR compared to the variation of $N_{\text{uav}}$.
However, even when $N_{\text{uav}}$ is equal to 14, the largest SINR error still has a 0.9 probability of being within 0.017 dB, which suggests that applying eigenvalue decomposition to $\mathcal{P}3'$ in order to approximate the solution to $\mathcal{P}2$ is feasible.

In Fig. \ref{fig_ChannelInversion_NumUEUAV}, when $N_{\text{ue}}$ changes, we can see that the proposed JCJ scheme has 2$\sim$3 dB gain over the conventional CI scheme in transmit power,
demonstrating the integration gain of the proposed JCJ scheme.
With the increase of $N_{\text{uav}}$ from 2 to 8, the performance gap expands from 2 dB to 13 dB, showing that the proposed JCJ scheme becomes increasingly advantageous as the number of unauthorized UAVs grows.
Notably, when $N_{\text{ue}}$ is set to 3 and $N_{\text{uav}}$ to 14, such that the total number of devices $N_{\text{ue}} + N_{\text{uav}} = 17$ exceeds the number of transmit antennas $N_{\text{tx}} = 16$, the CI scheme fails to work.

\subsection{Impact of Number of Transmit Antennas on Transmit Power}\label{sec_sim_Nt}
In Fig. \ref{fig_Nt_PowerError}, the larger $N_{\text{tx}}$, the smaller power error.
This is because the more antennas there are, the narrower the beam becomes, allowing for anticipated energy in the desired direction while minimizing energy leakage in other directions.
In Fig. \ref{fig_Nt_PowerErrorNormalized}, the larger $N_{\text{tx}}$, the larger normalized power error.
This is because, while the transmit power error decreases, the overall transmit power also reduces, as illustrated in Fig. \ref{fig_Nt}.
With $N_{\text{tx}}=8$, there is a 0.8 probability that the normalized power error is below $-9.5$ dB, indicating that the proposed JCJ scheme becomes more effective in small antenna systems.

In Fig. \ref{fig_Nt_RateError} and \ref{fig_Nt_SINR}, with different values of $N_{\text{tx}}$, the differences between the actual achievable rates and the expected achievable rates are minimal.
The same holds true for the SINR error.
Furthermore, these differences continue to decrease as $N_{\text{tx}}$ increases, which is attributed to the narrowing of the beam's lobe.

From Fig. \ref{fig_ChannelInversion_Nt}, we can observe that as the number of antennas increases, the gain of the JCJ scheme gradually diminishes.
When the number of antennas reaches 32, the CI scheme surpasses the JCJ scheme, indicating that with a larger number of antennas, the improved angular resolution facilitates lower transmit power for independent streams.
Therefore, the proposed JCJ scheme is advantageous in scenarios with a small number of antennas or when the combined number of UEs and unauthorized UAVs exceeds the number of antennas.
\begin{Remark}
	From Fig. \ref{fig_ChannelInversion_Nt}, it appears counter-intuitive that a higher number of antennas is associated with lower power consumption, as an increased number of antennas implies more radio frequency chains, which are known to be power-intensive.
	This is due to the fact that our simulation does not take into account the static power consumption of the radio frequency chains, as this factor does not affect the performance comparison between the proposed JCJ scheme and the CI scheme.
\end{Remark}

\begin{Remark}
	In our simulations, we ensure that the angular separation between UEs is at least 5° to avoid making $\mathcal{P}1$ unsolvable.
	Additionally, we impose equality constraints in the first two inequality constraints when solving $\mathcal{P}3'$ to constrain the solution's degrees of freedom, thereby ensuring that the eigenvalue decomposition of the solution to $\mathcal{P}3'$ satisfies the constraints of $\mathcal{P}2$.
	Moreover, some instances of excessively large transmit power observed in the simulation can be attributed to insufficient angular separation between UEs and unauthorized UAVs.
\end{Remark}

\subsection{Insights}\label{sec_insight}
\begin{itemize}
	\item Fig. \ref{fig_rationality}: Dedicated jamming streams for countering unauthorized UAVs are not necessary.
	
	\item Figs. \ref{fig_ChannelInversion_Threshold}, \ref{fig_ChannelInversion_NumUEUAV}, and \ref{fig_ChannelInversion_Nt}: The lower the expected SINR threshold, the higher the number of unauthorized UAVs, and the fewer the transmit antennas, the more pronounced is the advantage of the proposed JCJ scheme over the CI scheme.
	The most significant advantage of the proposed JCJ scheme is its ability to operate effectively when the total number of UEs and unauthorized UAVs exceeds the number of transmit antennas.
\end{itemize}

\section{Conclusion}\label{sec_Conclusion}
To leverage the widely deployed MIMO BSs for ubiquitous countermeasures against unauthorized UAVs, this paper focused on the joint design of beamforming to establish a dual-functional system, simultaneously communicating with legitimate UEs while jamming unauthorized UAVs.
First, we have formulated the JCJ problem and then applied SDR to convert it into a standard SDP problem.
By analyzing the structure of the formulated problem, we have found dedicated jamming streams for countering unauthorized UAVs are not necessary.
Then, we have proposed to add an innovative constraint that ensures the solution of the reformulated problem satisfy the rank-1 constraint.
Finally, we have verified the correctness and effectiveness of the proposed JCJ scheme through extensive simulations, from which some interesting insights are obtained.
The results confirm that the proposed JCJ scheme is advantageous in scenarios with a small number of antennas or when the combined number of UEs and unauthorized UAVs exceeds the number of antennas.

However, the JCJ scheme proposed in this paper did not incorporate sensing functionalities.
The joint design of communication, sensing, and countermeasures presents a promising avenue for future research.
Furthermore, investigating how to coordinate communication, sensing, and jamming when the communication and jamming frequency bands differ in more practical scenarios is another worthwhile direction for future exploration.
Additionally, as for the array structure of the BS using hybrid beamforming, the proposed approach could serve as an input for many existing hybrid beamforming solutions, it may increase complexity and potentially lead to suboptimal performance. 
Therefore, directly designing a hybrid beamformer to jam unauthorized UAVs using communication signals, based on the problem’s structure, presents a valuable direction for future research.
Moreover, leveraging the beamfocusing property of the extremely-large array may lead more effective jamming task.

\begin{appendices}
	\setcounter{equation}{0}
	\renewcommand{\theequation}{A-\arabic{equation}}
	\section{Proof of Theorem 1}\label{appendix_1}
	Prior to presenting the proof of Theorem \ref{theorem_1}, we first provide Lemmas \ref{lemma_1} and \ref{lemma_2}.
	\begin{Lemma}\label{lemma_1}
		For any positive semi-definite matrix $\mathbf{A}\in\mathbb{C}^{K\times K}$, ${\bf{A}}[{i_1}:{i_2},{i_1}:{i_2}] \in {\mathbb{C}^{({i_2} - {i_1} + 1) \times ({i_2} - {i_1} + 1)}}$	is also a positive semi-definite matrix, where $i_2\ge i_1, i_1\in[1,K], i_2\in[1,K]$.
	\end{Lemma}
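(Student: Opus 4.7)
The plan is to prove this classical result by the standard zero-padding trick, which reduces the positive semi-definiteness of a principal submatrix to that of the full matrix. The key idea is that the quadratic form $\mathbf{y}^H \mathbf{A}[i_1:i_2, i_1:i_2]\mathbf{y}$ can always be realized as a quadratic form $\mathbf{x}^H \mathbf{A} \mathbf{x}$ by embedding $\mathbf{y}$ into a larger vector $\mathbf{x}$ whose entries outside the index range $[i_1, i_2]$ are set to zero.

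First I would fix an arbitrary vector $\mathbf{y}\in\mathbb{C}^{i_2-i_1+1}$ and define $\mathbf{x}\in\mathbb{C}^K$ by setting $\mathbf{x}[i_1:i_2]=\mathbf{y}$ and $\mathbf{x}[j]=0$ for all $j\notin[i_1,i_2]$. Expanding the quadratic form block by block, the zero entries of $\mathbf{x}$ annihilate every row and column of $\mathbf{A}$ outside the range $[i_1,i_2]$, so that $\mathbf{x}^H\mathbf{A}\mathbf{x}=\mathbf{y}^H\mathbf{A}[i_1:i_2,i_1:i_2]\mathbf{y}$. Since $\mathbf{A}\succeq\mathbf{0}$ by hypothesis, the left-hand side is non-negative, and therefore so is the right-hand side. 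Because $\mathbf{y}$ was arbitrary, the submatrix is positive semi-definite. Hermitianness of the submatrix is inherited trivially from Hermitianness of $\mathbf{A}$.

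There is essentially no obstacle in this argument; the only care needed is to make the zero-padding embedding explicit, so that the reduction to the quadratic form on $\mathbf{A}$ is unambiguous. I would therefore keep the presentation short and emphasize that this fact will be invoked later (in the proof of Theorem~\ref{theorem_1}) to assert positive semi-definiteness of the block $\mathbf{A}[i_1:i_2,i_1:i_2]$ of the optimization variable $\mathbf{\tilde F}$, which underlies the block-diagonal structural conclusion.
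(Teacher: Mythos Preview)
Your argument is correct and complete. It differs from the paper's proof in method: the paper takes the eigenvalue decomposition $\mathbf{A}=\sum_{k}\alpha_k\mathbf{a}_k\mathbf{a}_k^H$ with $\alpha_k\ge 0$, observes that the principal submatrix is $\sum_k\alpha_k\mathbf{a}_k[i_1{:}i_2]\mathbf{a}_k^H[i_1{:}i_2]$, and then checks $\mathbf{b}^H\tilde{\mathbf{A}}\mathbf{b}=\sum_k\alpha_k|\mathbf{a}_k^H[i_1{:}i_2]\mathbf{b}|^2\ge 0$ directly. Your zero-padding reduction is more elementary in that it avoids the spectral theorem entirely and works verbatim for any Hermitian matrix satisfying $\mathbf{x}^H\mathbf{A}\mathbf{x}\ge 0$; the paper's route, on the other hand, makes the rank-one-sum structure of the submatrix explicit, which is mildly convenient since the same decomposition style is reused in the proof of Lemma~\ref{lemma_2}. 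Either approach suffices for the later application to the blocks of $\mathbf{\tilde F}$ in Theorem~\ref{theorem_1}.
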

	\begin{proof}
		The eigenvalue decomposition of $\mathbf{A}$ is given as 
		\begin{align}
			\mathbf{A}=\sum_{k=1}^{K}\alpha_k \mathbf{a}_k\mathbf{a}_k^H,
		\end{align}
		where $\alpha_k \ge0$.
		Then, $\mathbf{\tilde A}={\bf{A}}[{i_1}:{i_2},{i_1}:{i_2}]
		=\sum_{i=k}^{K}\alpha_k \mathbf{a}_k[{i_1}:{i_2}]\mathbf{a}_k^H[{i_1}:{i_2}]$.
		For $\forall\ \mathbf{b}\in\mathbb{C}^{i_2-i_1+1}$, 
		\begin{align}
				\mathbf{b}^H \mathbf{\tilde A}\mathbf{b}&=
				\sum_{k=1}^{K}\alpha_k \mathbf{b}^H\mathbf{a}_k[{i_1}:{i_2}]\mathbf{a}_k^H[{i_1}:{i_2}]\mathbf{b}\nonumber\\
				& =\sum_{k=1}^{K}\alpha_k |q_k|^2 \ge 0,
		\end{align}
		where $q_k = \mathbf{a}_k^H[{i_1}:{i_2}]\mathbf{b}$.
		Therefore, ${\bf{A}}[{i_1}:{i_2},{i_1}:{i_2}]$ is also a positive semi-definite matrix.
		\qedhere
	\end{proof}
	\begin{Lemma}\label{lemma_2}
		For any positive semi-definite matrix $\mathbf{A}_1\in\mathbb{C}^{K\times K}$, negative semi-definite matrix $\mathbf{A}_2\in\mathbb{C}^{K\times K}$, $\text{tr}(\mathbf{A}_1\mathbf{A}_2)\le 0$.
	\end{Lemma}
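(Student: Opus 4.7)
The plan is to mimic the style already used in the proof of Lemma \ref{lemma_1}, namely to expand both matrices in their spectral decompositions and reduce the trace to a manifestly signed sum of squared moduli. First I would write $\mathbf{A}_1=\sum_{k=1}^{K}\alpha_k\mathbf{a}_k\mathbf{a}_k^H$ with $\alpha_k\ge 0$ (since $\mathbf{A}_1\succeq\mathbf{0}$) and $\mathbf{A}_2=\sum_{j=1}^{K}\beta_j\mathbf{b}_j\mathbf{b}_j^H$ with $\beta_j\le 0$ (since $\mathbf{A}_2\preceq\mathbf{0}$), where the eigenvectors are unit-norm.

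Next I would substitute these expansions into $\text{tr}(\mathbf{A}_1\mathbf{A}_2)$ and use the linearity and cyclic property of the trace, together with the identity $\text{tr}(\mathbf{a}_k\mathbf{a}_k^H\mathbf{b}_j\mathbf{b}_j^H)=|\mathbf{a}_k^H\mathbf{b}_j|^2$, to obtain
\begin{equation}
\text{tr}(\mathbf{A}_1\mathbf{A}_2)=\sum_{k=1}^{K}\sum_{j=1}^{K}\alpha_k\beta_j\,|\mathbf{a}_k^H\mathbf{b}_j|^2.
\end{equation}
Since every factor $\alpha_k\ge 0$, every factor $\beta_j\le 0$, and each $|\mathbf{a}_k^H\mathbf{b}_j|^2\ge 0$, every term in the double sum is non-positive, so the conclusion $\text{tr}(\mathbf{A}_1\mathbf{A}_2)\le 0$ follows immediately.

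As an alternative single-line route I could write $\mathbf{A}_1=\mathbf{B}^H\mathbf{B}$ using a PSD square root and invoke the cyclic property to get $\text{tr}(\mathbf{A}_1\mathbf{A}_2)=\text{tr}(\mathbf{B}\mathbf{A}_2\mathbf{B}^H)$; then $\mathbf{B}\mathbf{A}_2\mathbf{B}^H$ is negative semi-definite by congruence, so its trace (the sum of its non-positive eigenvalues) is $\le 0$. Either route works; I would prefer the spectral version for consistency with the proof of Lemma \ref{lemma_1} just above. There is no real obstacle here: the only step that needs a sentence of justification is the non-negativity of the existence of the eigenvalue/eigenvector pairs with the claimed signs, which is the definition of (semi-)definiteness for Hermitian matrices.
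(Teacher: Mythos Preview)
Your primary argument is correct and essentially identical to the paper's own proof: the paper also writes the spectral decompositions of both matrices, expands $\text{tr}(\mathbf{A}_1\mathbf{A}_2)$ by linearity and the cyclic property into $\sum_{k}\sum_{k'}\alpha_{1,k}\alpha_{2,k'}|\mathbf{a}_{1,k}^H\mathbf{a}_{2,k'}|^2$, and concludes from the signs of the eigenvalues. Your alternative congruence/square-root route is a valid shortcut not used in the paper, but the main line of your proposal matches the paper exactly.
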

	\begin{proof}
		The eigenvalue decomposition of $\mathbf{A}_1$ and $\mathbf{A}_2$ are given as 
		\begin{align}
			\mathbf{A}_1&=\sum\nolimits_{k=1}^{K}\alpha_{1,k} \mathbf{a}_{1,k}\mathbf{a}_{1,k}^H,\\
			\mathbf{A}_2&=\sum\nolimits_{k'=1}^{K}\alpha_{2,k'} \mathbf{a}_{2,k'}\mathbf{a}_{2,k'}^H,
		\end{align}
		where $\alpha_{1,k} \ge0$ and $\alpha_{2,k'} \le0$.
		\begin{align}
		{\text{tr(}}{{\bf{A}}_1}{{\bf{A}}_2}) &= {\text{tr}}\left( {\sum\limits_{k = 1}^K {{\alpha _{1,k}}} {{\bf{a}}_{1,k}}{\bf{a}}_{1,k}^H\sum\limits_{k' = 1}^K {{\alpha _{2,k'}}} {{\bf{a}}_{2,k'}}{\bf{a}}_{2,k'}^H} \right)\nonumber\\
		&= {\text{tr}}\left( {\sum\limits_{k = 1}^K {\sum\limits_{k' = 1}^K {{\alpha _{1,k}}{\alpha _{2,k'}}{{\bf{a}}_{1,k}}{\bf{a}}_{1,k}^H{{\bf{a}}_{2,k'}}{\bf{a}}_{2,k'}^H} } } \right)\\
		&= \sum\limits_{k = 1}^K {\sum\limits_{k' = 1}^K {{\alpha _{1,k}}{\alpha _{2,k'}}} } {\text{tr}}\left( {{{\bf{a}}_{1,k}}{\bf{a}}_{1,k}^H{{\bf{a}}_{2,k'}}{\bf{a}}_{2,k'}^H} \right)\nonumber\\
		&= \sum\limits_{k = 1}^K {\sum\limits_{k' = 1}^K {{\alpha _{1,k}}{\alpha _{2,k'}}} } {\text{tr}}\left( {{\bf{a}}_{1,k}^H{{\bf{a}}_{2,k'}}{\bf{a}}_{2,k'}^H{{\bf{a}}_{1,k}}} \right)\nonumber\\
		 &= \sum\limits_{k = 1}^K {\sum\limits_{k' = 1}^K {{\alpha _{1,k}}{\alpha _{2,k'}}} } |{\bf{a}}_{1,k}^H{{\bf{a}}_{2,k'}}{|^2} \le 0\nonumber.
		\end{align}
		\qedhere
	\end{proof}
	
	Since ${{\bf{\tilde H}}_{1,n}^{}}$, ${{\bf{\bar H}}_m^{}}$, and ${\bf{\tilde H}}_{2,n}^H{\bf{\tilde H}}_{2,n}^{}$ are block-diagonal, $\mathbf{A}_{1,n}$ and $\mathbf{A}_{2,m}$ are also block-diagonal.
	Therefore, the values of the off-block-diagonal elements of $\mathbf{\tilde F}$ do not affect the solution to $\mathcal{P}3$, and $\mathbf{\tilde F}$ is block-diagonal if its initial values of the off-block-diagonal elements are zero.
	
	Since the sub-blocks of ${{\bf{A}}_{1,n}}[{N_{{\text{ue}}}}{N_{{\text{tx}}}} + 1:{N_s}{N_{{\text{tx}}}}{\text{,}}\;{N_{{\text{ue}}}}{N_{{\text{tx}}}} + 1:{N_s}{N_{{\text{tx}}}}]$ are negative definite matrices, i.e.,
	\begin{align}
		\begin{array}{l}
			{{\bf{A}}_{1,n}}[{N_{{\text{ue}}}}{N_{{\text{tx}}}} + 1 + (i - 1){N_{{\text{tx}}}}:{N_{{\text{ue}}}}{N_{{\text{tx}}}} + i{N_{{\text{tx}}}}{\text{,}}\\
			\;{N_{{\text{ue}}}}{N_{{\text{tx}}}} + 1 + (i - 1){N_{{\text{tx}}}}:{N_{{\text{ue}}}}{N_{{\text{tx}}}} + i{N_{{\text{tx}}}}]\\ \qquad\qquad\qquad \forall i=1,\ldots,N_{\text{uav}},
		\end{array}
	\end{align}
	are negative definite matrices, we can obtain the conclusion as
	\begin{align}\label{eq_Appendix_1}
		\begin{array}{l}
			{{\bf{\tilde F}}}[{N_{{\text{ue}}}}{N_{{\text{tx}}}} + 1 + (i - 1){N_{{\text{tx}}}}:{N_{{\text{ue}}}}{N_{{\text{tx}}}} + i{N_{{\text{tx}}}}{\text{,}}\\
			\;{N_{{\text{ue}}}}{N_{{\text{tx}}}} + 1 + (i - 1){N_{{\text{tx}}}}:{N_{{\text{ue}}}}{N_{{\text{tx}}}} + i{N_{{\text{tx}}}}]=\mathbf{0}_{N_{\text{tx}}\times N_{\text{tx}}}\\ \qquad\qquad\qquad \forall i=1,\ldots,N_{\text{uav}}.
		\end{array}
	\end{align}
	We can employ the method of proof by contradiction to prove (\ref{eq_Appendix_1}).
	If ${\bf{\tilde F}}[{N_{{\text{ue}}}}{N_{{\text{tx}}}} + 1:{N_s}{N_{{\text{tx}}}}{\text{,}}\;{N_{{\text{ue}}}}{N_{{\text{tx}}}} + 1:{N_s}{N_{{\text{tx}}}}] \ne {\bf{0}}_{N_{\text{uav}}N_{\text{tx}}\times N_{\text{uav}}N_{\text{tx}}}$, we assume that there exists $\tilde i$ that 
	\begin{align}
		\begin{array}{l}
			{\bf{\tilde F}}'={{\bf{\tilde F}}}[{N_{{\text{ue}}}}{N_{{\text{tx}}}} + 1 + (\tilde i - 1){N_{{\text{tx}}}}:{N_{{\text{ue}}}}{N_{{\text{tx}}}} + \tilde i{N_{{\text{tx}}}}{\text{,}}\\
			\;{N_{{\text{ue}}}}{N_{{\text{tx}}}} + 1 + (\tilde i - 1){N_{{\text{tx}}}}:{N_{{\text{ue}}}}{N_{{\text{tx}}}} + \tilde i{N_{{\text{tx}}}}]\ne\mathbf{0}_{N_{\text{tx}}\times N_{\text{tx}}}.
		\end{array}
	\end{align}
	Then, we define
	\begin{align}
		\begin{array}{l}
			{\bf{A}}'={{\bf{A}}_{1,n}}[{N_{{\text{ue}}}}{N_{{\text{tx}}}} + 1 + (\tilde i - 1){N_{{\text{tx}}}}:{N_{{\text{ue}}}}{N_{{\text{tx}}}} + \tilde i{N_{{\text{tx}}}}{\text{,}}\\
			\;{N_{{\text{ue}}}}{N_{{\text{tx}}}} + 1 + (\tilde i - 1){N_{{\text{tx}}}}:{N_{{\text{ue}}}}{N_{{\text{tx}}}} + \tilde i{N_{{\text{tx}}}}].
		\end{array}
	\end{align}
	Through Lemma \ref{lemma_1}, we know that ${\bf{\tilde F}}'$ is positive semi-definite matrix.
	Through Lemma \ref{lemma_2}, we know that $\text{tr}({\bf{ A}}'{\bf{\tilde F}}')\le 0$.
	Therefore, we need the eigenvalue of
	\begin{align}
		\begin{array}{l}
			{\bf{\tilde F}}[1 + (i - 1){N_{{\text{tx}}}}:i{N_{{\text{tx}}}}{\text{,}}\;1 + (i - 1){N_{{\text{tx}}}}:i{N_{{\text{tx}}}}],
			\forall i = 1,\ldots,{N_{{\text{ue}}}},\nonumber
		\end{array}
	\end{align}
	to be larger to satisfy the constraint $C_{3,1}$ of $\mathcal{P}3$.
	However, this is contrary to the objective of $\mathcal{P}3$.
	Therefore, (\ref{eq_Appendix_1}) holds true, and Theorem \ref{theorem_1} is proved.
	
	\section{Proof of Theorem 2}\label{appendix_2}
	Notice the constraint $\text{rank}(\mathbf{\tilde F})=1$, $\mathbf{\tilde F}=\mathbf{\tilde f}\mathbf{\tilde f}^H$, where $\mathbf{\tilde f}\in \mathbb{C}^{N_sN_{\text{tx}}}$.
	Let the maximum index of non-zero value of $\mathbf{\tilde f}$ be $i_{\text{max}}$.
	Then, the proof of Theorem \ref{theorem_2} can be transformed to prove $i_{\text{max}}\le N_{\text{ue}}N_{\text{tx}}$.
	We can employ the method of proof by contradiction to prove $i_{\text{max}}\le N_{\text{ue}}N_{\text{tx}}$.
	
	Let $i_{\text{max}}> N_{\text{ue}}N_{\text{tx}}$ and $i_{\text{max}} = N_{\text{ue}}N_{\text{tx}} + \tilde i$, where $\tilde i \ge 1$.
	By defining
	\begin{align}
		{\bf{\tilde F}}' &= {\bf{\tilde F}}[{N_{{\text{ue}}}}{N_{{\text{tx}}}} + \tilde i:{N_s}{N_{{\text{tx}}}}{\text{,}}\;{N_{{\text{ue}}}}{N_{{\text{tx}}}} + \tilde i:{N_s}{N_{{\text{tx}}}}] \ne {\bf{0}}\nonumber,\\
		{\bf{A}}' &= {{\bf{A}}_{1,n}}[{N_{{\text{ue}}}}{N_{{\text{tx}}}} + \tilde i:{N_s}{N_{{\text{tx}}}}{\text{,}}\;{N_{{\text{ue}}}}{N_{{\text{tx}}}} + \tilde i:{N_s}{N_{{\text{tx}}}}],
	\end{align}
	we know that ${\bf{\tilde F}}'$ is a positive semi-definite matrix through Lemma \ref{lemma_1}.
	Through Lemma \ref{lemma_2}, we know that $\text{tr}({\bf{ A}}'{\bf{\tilde F}}')\le 0$.
	Therefore, we need the eigenvalue of ${{\bf{\tilde F}}[1:{N_{{\text{ue}}}}{N_{{\text{tx}}}},\;1:{N_{{\text{ue}}}}{N_{{\text{tx}}}}]}$
	to be larger to satisfy the constraint $C_{2,1}$ of $\mathcal{P}2$.
	However, this is contrary to the objective of $\mathcal{P}2$.
	Therefore, $i_{\text{max}}\le N_{\text{ue}}N_{\text{tx}}$ holds true and Theorem \ref{theorem_2} is proved.

	\section{Proof of Theorem 3}\label{appendix_3}
	Construct $\mathbf{a}_1$, which is a rearrangement of $\mathbf{a}$ in ascending order of its absolute values.
	Construct $\mathbf{a}_2$, which is arbitrary permutation of $\mathbf{a}$.
	Through \textit{rearrangement inequality}, we know 
	\begin{align}
		|\mathbf{a}_1^H\mathbf{a}_1|
		\ge
		|\mathbf{a}_1^H\mathbf{a}_2|.
	\end{align}
	Furthermore, we can obtain
	\begin{align}\label{eq_Appendix_3}
		\text{Re}(\mathbf{a}_1^H\mathbf{a}_1)
		\ge
		\text{Re}(\mathbf{a}_1^H\mathbf{a}_2),
	\end{align}
	since $|\mathbf{a}_1^H\mathbf{a}_1|=\text{Re}(\mathbf{a}_1^H\mathbf{a}_1)$ and $|\mathbf{a}_1^H\mathbf{a}_2|\ge \text{Re}(\mathbf{a}_1^H\mathbf{a}_2)$.	
	By comparing $\mathbf{a}_1^H\mathbf{a}_1$ in (\ref{eq_Appendix_3}) to $\text{tr}(\mathbf{A})$ in Theorem \ref{theorem_3}, and comparing $\mathbf{a}_1^H\mathbf{a}_2$ in (\ref{eq_Appendix_3}) to $\text{tr}(\mathbf{AP}_k)$ in Theorem \ref{theorem_3},
	Theorem \ref{theorem_3} can be proved.

\end{appendices}
\bibliographystyle{IEEEtran}
\bibliography{ref}

% \printbibliography{}

\vfill

\end{spacing}
\end{document}